\newtheorem{theorem}{Theorem}
\newtheorem{lemma}[theorem]{Lemma}
\newtheorem{corollary}[theorem]{Corollary}
\newtheorem{definition}{Definition}
\newtheorem{remark}{Remark}
\newcommand{\trace}{\mathrm{Tr}}
\newcommand{\R}{\mathbb{R}}
\newcommand{\eig}{\mathcal{E}}
\newcommand{\G}{\mathcal{G}}
\newcommand{\pomega}{\mathcal{P}_{\Omega}}
\newcommand{\pomegac}{\mathcal{P}_{\Omega^c}}
\newcommand{\operator}{\mathcal{L}^{\alpha}}
\newcommand{\projop}{ \mathcal{P}}
\newcommand{\ged}{\textrm{GED}}
\newcommand{\bones}{\mathbf{1}}
\newcommand{\norm}[1]{\left\lVert#1\right\rVert} 
\DeclareMathOperator{\conv}{conv}
\DeclareMathOperator{\sign}{sign}
\title{Convex Graph Invariant Relaxations for Graph Edit Distance} 
\author{Utkan Onur Candogan$^\dag$ and Venkat Chandrasekaran$^{\dag,\ddag}$ \thanks{Email: utkan@caltech.edu, venkatc@caltech.edu.  The authors were supported in part by NSF grants CCF-1350590 and CCF-1637598, by AFOSR grant FA9550-16-1-0210, and by a Sloan research fellowship.} \vspace{0.25in} \\ $^\dag$ Department of Electrical Engineering\\ $^\ddag$ Department of Computing and Mathematical Sciences \\ California Institute of Technology \\ Pasadena, CA 91125}
\begin{document}

\maketitle

\begin{abstract}
The edit distance between two graphs is a widely used measure of similarity that evaluates the smallest number of vertex and edge deletions/insertions required to transform one graph to another.  It is NP-hard to compute in general, and a large number of heuristics have been proposed for approximating this quantity.  With few exceptions, these methods generally provide upper bounds on the edit distance between two graphs.  In this paper, we propose a new family of computationally tractable convex relaxations for obtaining lower bounds on graph edit distance.  These relaxations can be tailored to the structural properties of the particular graphs via \emph{convex graph invariants}.  Specific examples that we highlight in this paper include constraints on the graph spectrum as well as (tractable approximations of) the stability number and the maximum-cut values of graphs.  We prove under suitable conditions that our relaxations are tight (i.e., exactly compute the graph edit distance) when one of the graphs consists of few eigenvalues.  We also validate the utility of our framework on synthetic problems as well as real applications involving molecular structure comparison problems in chemistry.
\end{abstract}

\emph{Keywords}: convex optimization; majorization; maximum cut; semidefinite programming; stability number; strongly regular graphs.

\section{Introduction}\label{Introduction}

Graphs are widely used to represent the structure underlying a collection of interacting entities.  A common computational question arising in many contexts is that of measuring the similarity between two graphs.  For example, the unknown functions of biological structures such as proteins, RNAs and genes are often deduced from structures which have similar sequences with known functions \cite{ibragimov2013gedevo, jiang2002general, memivsevic2012c, sharan2005conserved,sharan2006modeling}.  Evaluating graph similarity also plays a central role in various pattern recognition applications \cite{conte2004thirty,neuhaus2006edit}, specifically in areas such as handwriting recognition \cite{fischer2013fast,lu1991hierarchical}, fingerprint classification \cite{isenor1986fingerprint,neuhaus2005graph} and face recognition \cite{wiskott1997face}.


\begin{figure}[hbt]
\begin{center}
\includegraphics[scale=0.25]{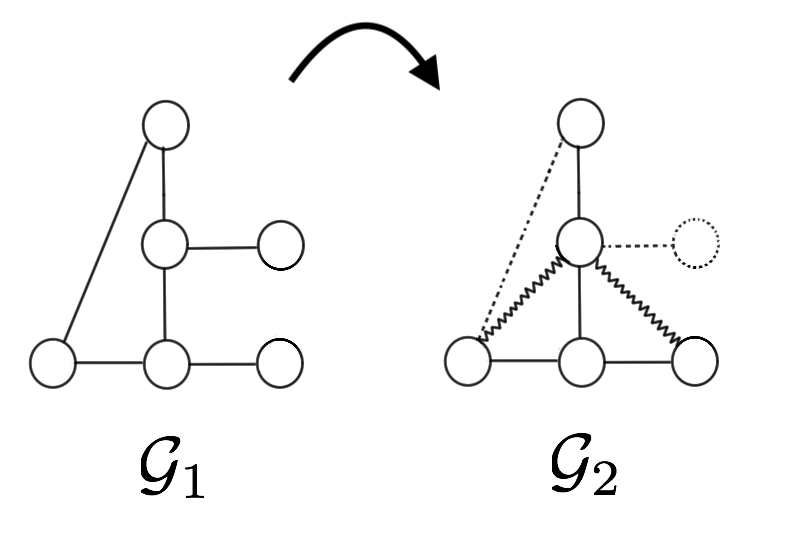}
\caption{ An instance of a graph edit distance problem in which we wish to calculate minimum number of edit operations required for transforming graph $\G_1$ to graph $\G_2$. The edit operations are encoded by line style: The dashed graph elements are to be removed from $\G_1$ and the zigzagged graph elements are to be added to $\G_1$ for transforming $\G_1$ to $\G_2$. Assuming a cost of $1$ for every edit operation, we conclude that the graph edit distance between $\G_1$ and $\G_2$ is 5. }\label{FigureIntroGED}
\end{center}
\end{figure}


The notion of similarity that is the most commonly considered is the \emph{graph edit distance} \cite{sanfeliu1983distance}.  The edit distance $\ged(\G_1,\G_2)$ between two graphs $\G_1$ and $\G_2$ is the smallest number of operations required to transform $\G_1$ into $\G_2$ by a sequence of edits or changes applied to the vertices and edges of $\G_1$.  A particular sequence of edit operations transforming $\G_1$ into $\G_2$ is usually referred to as an edit path. For unlabeled graphs, the permissible set of edit operations are usually insertions/deletions of vertices/edges.  For labeled graphs, the set of permissible edit operations can also include vertex/edge relabelings.  In some situations, certain types of edits are considered more `severe' than others and different edits have different costs associated to them; in such cases, the edit distance is the smallest cost over all edit paths that transform one graph to another, where the cost of an edit path is the sum of the costs of the edits that compose the path.  See Figure \ref{FigureIntroGED} for an illustration of a simple graph edit distance problem.

The problem of computing the graph edit distance is NP-hard in general \cite{garey2002computers}, and in practice exact calculation of the edit distance is only feasible for small-sized graphs.  Thus, significant efforts have been directed towards developing computationally tractable heuristics for approximating the edit distance \cite{abu2017graph,bougleux2017graph,daller2018approximate, justice2006binary, leordeanu2009integer, liu2014gnccp, riesen2009approximate, riesen2014computing, zeng2009comparing} or for exactly computing the edit distance for graphs from specific families such as planar graphs \cite{neuhaus2004error}.  These methods are largely combinatorial in nature, and most of them aim at identifying an edit path that transforms one graph to the other.  Consequently, much of the prior literature on this topic provides techniques that lead to upper bounds on the edit distance between two graphs.  In contrast, far fewer approaches have been proposed for obtaining lower bounds on the edit distance.  We are aware of three notable examples; in \cite{justice2006binary} and \cite{riesen2014computing} the authors propose tractable linear programming relaxations of intractable combinatorial optimization formulations of the edit distance based on $0/1$ optimization and the quadratic assignment problem, respectively, while in  \cite{zeng2009comparing} the authors propose a combinatorial algorithm based on a multiset representation of the graph that enables the efficient computation of upper and lower bounds of the graph edit distance.
 
In this paper, we develop a computationally efficient framework for obtaining lower bounds on the graph edit distance.  Our contributions differ qualitatively from the prior literature in two aspects.  First, our approach can be tailored to the structural properties of the specific graphs at hand based on the notion of a \emph{convex graph invariant}.  These lead to useful lower bounds on the edit distance if one of the graphs is `suitably structured'.  Second, we give a theoretical analysis of conditions on pairs of graphs under which a tractable semidefinite relaxation based on the spectral properties of a graph provably computes the edit distance, i.e., our lower bound is tight.

%

\subsection{Our Framework}

Much of the focus of our development and our analysis is on the edit distance between two unlabeled graphs on the same number of vertices, with edge deletions and insertions being the allowed edit operations.  We discuss in Section~\ref{SectionExtendedFramework} an extension of our framework to settings in which the number of vertices in the two graphs may be different and in which vertex deletions and insertions are also allowed.  Let $A_1\in\mathbb{S}^n$ and $A_2\in\mathbb{S}^n$ represent the adjacency matrices of two unweighted, unlabeled, simple and loopless graphs $\G_1$ and $\G_2$ on $n$ vertices.  (Here $\mathbb{S}^n$ denotes the space of $n \times n$ real symmetric matrices.)  The following optimization problem gives a combinatorial formulation of the computation of the edit distance between $\G_1$ and $\G_2$:
\begin{equation}\label{Optimization Combinatorial}
\begin{aligned}
\ged(\G_1,\G_2)   &= \min_{\substack{X,E\in\mathbb{S}^n}}~ \sum_{1\leq i < j \leq n} \mathbbm{1}_{E_{ij} \neq 0}\\
&~~~~~~s.t.~~~~X+E = A_2\\
&~~~~~~~~~~ ~~~~X \in \{ \Pi A_1 \Pi^T ~:~ \Pi \text{ is an $n\times n$ permutation matrix.} \}\\
&~~~~~~~~~~ ~~~~ E_{ij}\, \in \,\{-1,0,1\} ~ \forall\, i,j \in \{1,\dots,n\}.
\end{aligned}
\end{equation}
The function $\mathbbm{1}_{\{\cdot\}}$ denotes the usual indicator function, the decision variable $X$ is an adjacency matrix representing $\G_1$, and the decision variable $E$ specifies the edge deletions and insertions made to $\G_1$ to obtain $\G_2$.  One aspect of the problem \eqref{Optimization Combinatorial} is that its formulation is not symmetric with respect to $\G_1$ and $\G_2$, although the optimal value remains unchanged if $A_1$ and $A_2$ are swapped in the problem description, i.e., $\ged(\G_1,\G_2) = \ged(\G_2,\G_1)$; we revisit this point in the sequel.  Unsurprisingly, solving (\ref{Optimization Combinatorial}) is intractable in general as calculating the graph edit distance is an NP-hard problem.  Our approach in this paper is to obtain tractable convex relaxations of (\ref{Optimization Combinatorial}).  Relaxing the objective to a convex function is a straightforward matter; specifically, as the absolute value function constitutes a lower bound on the indicator function in the range $[-1,1]$; we replace the objective of (\ref{Optimization Combinatorial}) with the convex function $\tfrac{1}{2} \|E\|_{\ell_1}$, where $\|\cdot\|_{\ell_1}$ denotes the (entrywise) $\ell_1$ norm that sums the absolute values of the entries of a matrix.

The main remaining source of difficulty with obtaining a convex relaxation of \eqref{Optimization Combinatorial} is to identify a tractable convex approximation of a set of the form $\{ \Pi A \Pi^T \,:\, \Pi \text{ is an } n\times n \text{ permutation matrix}\}$ for a given matrix $A \in \mathbb{S}^n$.  When $A$ specifies an adjacency matrix of a graph, this set consists of all the adjacency matrices that describe the graph, thus highlighting the \emph{structural} attributes of the graph that remain invariant to vertex relabeling.  Consequently, we seek a convex approximation that similarly remains invariant to vertex relabeling.  We describe next a notion from \cite{chandrasekaran2012convex} that aims to address this challenge:

\begin{definition}\cite{chandrasekaran2012convex} A set $\mathcal{C} \subset \mathbb{S}^n$ is an \emph{invariant convex set} if it is convex and if $M \in \mathcal{C}$ implies that $\Pi M \Pi^T \in \mathcal{C}$ for all $n \times n$ permutation matrices $\Pi$.
\end{definition}

Invariant convex sets provide a useful convex modeling framework to constrain graph properties that remain invariant to vertex relabeling.  In particular, suppose that $\mathcal{C}_{\G_1} \subset \mathbb{S}^n$ is an invariant convex set that contains $\{ \Pi A_1 \Pi^T \,:\, \Pi \text{ is an } n\times n \text{ permutation matrix}\}$ and has an efficient description. Then, the following \emph{convex program} provides a lower bound on $\ged(\G_1,\G_2)$:
\begin{equation}\tag{$P$}\label{Optimization P}
\begin{aligned}
\ged_{LB}(\G_1,\G_2; \mathcal{C}_{\G_1})   &= \min_{\substack{X,E\in\mathbb{S}^n}}~ \tfrac{1}{2} \norm{E}_1 \\
&~~~~~~s.t.~~~~X+E = A_2\\
&~~~~~~~~~~ ~~~~X \in \mathcal{C}_{\G_1},\\
\end{aligned}
\end{equation}
It is evident that this problem provides a lower bound on $\ged(\G_1,\G_2)$ as the objective function here is a lower bound of the objective of \eqref{Optimization Combinatorial} over the constraint set of \eqref{Optimization Combinatorial}, and further the constraint set of \eqref{Optimization P} is an outer approximation of the constraint set of \eqref{Optimization Combinatorial}.  Unlike the optimal value $\ged(\G_1,\G_2)$ of \eqref{Optimization Combinatorial}, the optimal value $\ged_{LB}(\G_1,\G_2; \mathcal{C}_{\G_1})$ of \eqref{Optimization P} is not symmetric; specifically, if $\mathcal{C}_{\G_2}$ is some invariant convex set containing $\{ \Pi A_2 \Pi^T \,:\, \Pi \text{ is an } n\times n \text{ permutation matrix}\}$, then in general $\ged_{LB}(\G_1,\G_2; \mathcal{C}_{\G_1}) \neq \ged_{LB}(\G_2,\G_1; \mathcal{C}_{\G_2})$.  Therefore, in practice we propose computing both $\ged_{LB}(\G_1,\G_2; \mathcal{C}_{\G_1})$ and $\ged_{LB}(\G_2,\G_1; \mathcal{C}_{\G_2})$ for some invariant convex sets $\mathcal{C}_{\G_1}$ and $\mathcal{C}_{\G_2}$ corresponding to $\G_1$ and $\G_2$ respectively, and taking the larger of these quantities as both constitute lower bounds on $\ged(\G_1,\G_2)$.

This discussion leads naturally to the following question -- which invariant convex set $\mathcal{C}_\G$ best captures the structural properties of a graph $\G$?  Employing such a set in the relaxation \eqref{Optimization P} would provide better lower bounds on the edit distance.  Letting $A \in \mathbb{S}^n$ be an adjacency matrix of $\G$, the `tightest' invariant convex set that contains the collection $\{ \Pi A \Pi^T \,:\, \Pi \text{ is an } n\times n \text{ permutation matrix}\}$ is simply the convex hull of this collection.  However, this convex hull is intractable to describe for general graphs $\G$ (unless P=NP).  As a result, it is of interest to obtain computationally tractable invariant convex relaxations that reflect the structure in $\G$.  In the next subsection, we give a list of invariant convex sets that are tractable to compute and that can be `tuned' to the structure of $\G$.  These invariants can either be used individually or combined (at increased computational expense), thus yielding a flexible and powerful framework for obtaining bounds on the graph edit distance.  The focus of the rest of the paper is on investigating the utility of these invariants theoretically as well as via numerical experiments; we give a summary of our main contributions in Section \ref{SectionContributions}.

\subsection{Convexity and Graph Invariants} \label{SectionConvexInvariantSets}
We list here a few examples of invariant convex sets that play a prominent role in this paper; we refer the interested reader to \cite{chandrasekaran2012convex} for a more exhaustive list as well as additional properties of invariant convex sets.

\textbf{Loopless and edge weight constraints} Looplessness and edge weight bounds are not especially powerful constraints, but they nonetheless serve as simple examples of invariant convex sets.  Looplessness corresponds to the constraint set $\{ M \in \mathbb{S}^n ~|~  M_{ii} = 0 \text{ for }  i=1,\dots,n\}$, and bounds on the edge weights for unweighted graphs (for example) can be specified via the set $\{ M \in \mathbb{S}^n ~|~  0 \leq M_{ij} \leq 1 \text{ for }  i,j=1,\dots,n\}$.

\textbf{Spectral invariants} Let $\G$ be a graph represented by an adjacency matrix $A \in \mathbb{S}^n$ with eigenvalues $\lambda(A) \in \mathbb{R}^n$.  The smallest convex set containing all graphs that are isospectral to $\G$ is given by the following \emph{Schur-Horn orbitope} associated to $A$ \cite{sanyal2011orbitopes}:
\begin{align*}
\mathcal{C}_{\mathcal{SH}(\G)}  &= \conv\{M \in \mathbb{S}^n~|~ \lambda(M) = \lambda(A)\}.
\end{align*}
This set consists precisely of those matrices whose spectra are majorized by $\lambda(A)$.  One can replace the list of eigenvalues in this example with the degree sequence of a graph, and in a similar vein, consider the convex hull of all adjacency matrices representing graphs with the same degree sequence; see \cite{chandrasekaran2012convex} for more details.

A prominent way in which invariant convex sets can be constructed is via sublevel sets of convex graph invariants:

\begin{definition}  \cite{chandrasekaran2012convex} A function $f:\mathbb{S}^n \rightarrow \mathbb{R}$ is a \emph{convex graph invariant} if it is convex and if $f(M) = f(\Pi M \Pi^T)$ for all $M \in \mathbb{S}^n$ and all $n \times n$ permutation matrices $\Pi$.
\end{definition}

As with invariant convex sets, convex graph invariants characterize structural properties of a graph that are invariant to vertex relabeling.  The following convex graph invariants play a prominent role in our work:

\textbf{Inverse of the stability number} A stable set (also known as independent set) of a graph $\G$ is a subset of vertices of $\G$ such that no two vertices in the subset are connected. The stability number of a graph $\G$ is a graph invariant that is equal to the size of the largest stable set of $\G$.  It was shown by Motzkin and Straus \cite{motzkin1965maxima} that the inverse of the stability number admits the following variational description, where $A \in \mathbb{S}^n$ is an adjacency matrix representing $\G$:
\begin{align*}
\textrm{inv-stab-number}(A) &= \min_{x\in\mathbb{R}^n} x'(I+A)x\\
&~~~~~s.t. ~~~\sum_i x_i =1, ~ x_i \geq 0 \text{ for } i=1,\dots,n.
\end{align*}
As the stability number of a graph is NP-hard to compute for general graphs (the above program may be reformulated as a conic program with respect to the completely positive cone), the following tractable relaxation based on doubly nonnegative matrices is widely employed:
\begin{equation} \label{EqnInvStRelaxation}
\begin{aligned}
f(A) &= \min_{X\in\mathbb{S}^n} \trace(X(I+A) )\\
&~~~~~s.t. ~~~X\succeq 0,~ \bones' X \bones = 1, X_{ij} \geq 0 \text{ for } i,j=1,\dots,n.
\end{aligned}
\end{equation}
One can check that both $\textrm{inv-stab-number}(A)$ and $f(A)$ are concave graph invariants.

\textbf{Maximum cut} The maximum cut value of a graph is the maximum over all partitions of the vertices of the sum of the weights of the edges between the partitions.  For a graph $\G$ specified by adjacency matrix $A \in \mathbb{S}^n$, the maximum cut value is given as:
\begin{align*}
\textrm{max-cut}(A) &= \max_{y\in\{-1,1\}^n} \tfrac{1}{4}\,\sum_{i,j} A_{i,j} (1-y_iy_j).
\end{align*}
As this value is NP-hard to compute for general graphs, the following celebrated efficiently-computable relaxation is commonly used \cite{goemans1995improved}:
\begin{equation}\label{EqnMaxCutRelaxation}
\begin{aligned}
g(A) &= \max_{X\in\mathbb{S}^n} \tfrac{1}{4}\, \trace( A\, (\bones \bones^T-X ) )\\
&~~~~~s.t. ~~~X \succeq 0, ~ X_{ii}=1 \text{ for } i=1,\dots,n.
\end{aligned}
\end{equation}
Both $\textrm{max-cut}(A)$ and $g(A)$ are convex graph invariants as they are each invariant under conjugation of the argument by a permutation matrix and they are each expressed as a pointwise maximum of affine functions.


\subsection{Our Contributions} \label{SectionContributions}
The invariant convex sets listed in the previous section when used in the context of the optimization problem \eqref{Optimization P} all lead to valid lower bounds on the edit distance between two graphs.  The question then is whether certain invariants are more naturally suited to particular structural properties of graphs.  The main focus of this paper is on identifying attributes of graphs for which the invariants described above are well-suited, and evaluating in these contexts the quality of the bounds obtained via \eqref{Optimization P} both theoretically and through numerical experiments.  Specifically, we say that an invariant convex constraint set $\mathcal{C}_\G$ is well-suited to the structure of a graph $\G$ if $\ged_{LB}(\G,\G';\mathcal{C}_\G)$ provides a tight (or high-quality) lower bound of $\ged(\G,\G')$ for all graphs $\G'$ that are obtained via a small number of edge deletions and insertions applied to $\G$ (here `small' is interpreted relative to the total number of edges in $\G$).

In Section \ref{SectionSchurHorn}, we investigate theoretically the effectiveness of the Schur-Horn orbitope as an invariant convex set in providing lower bounds on the graph edit distance via \eqref{Optimization P}.  We consider a stylized setting in which a graph $\G$ on $n$ vertices is modified to a graph $\G'$ by adding or removing at most $d$ edges incident to each vertex of $\G$.  We prove in Theorem~\ref{TheoremMainTheorem} (see Section \ref{SectionMainResult}) that the optimal value of the convex program \eqref{Optimization P} with a Schur-Horn orbitope constraint set equals the graph edit distance between $\G$ and $\G'$, i.e., $\ged_{LB}(\G,\G';\mathcal{C}_{SH(\G)}) = \ged(\G,\G')$ provided: 1) $d$ is sufficiently small, 2) $\G$ has eigenspaces with the property that there exists a linear combination of the associated projection operators onto these spaces so that the largest entry in magnitude is suitably bounded, and 3) any matrix supported on entries corresponding to edits of $\G$ has only a small amount of its energy on each of the eigenspaces of $\G$; see Theorem~\ref{TheoremMainTheorem} for precise details.  Conditions similar to the third requirement appear in the authors' earlier work on employing the Schur-Horn orbitope in the context of the planted subgraph problem \cite{candogan2018finding}.  However, the second condition is novel and is motivated by the context of the present paper on graph edit distance.  Under the additional assumption that $\G$ is vertex-transitive, Corollary \ref{CorollaryMain} provides a simple formula on the maximum allowable number $d$ of edge additions/deletions per vertex of $\G$; we illustrate the utility of this formula by computing bounds on $d$ for many graph families such as Johnson graphs, Kneser graphs, Hamming graphs and other strongly regular graphs.  Indeed, for some of these families, our results are `order-optimal' in the sense that our bounds on $d$ are on the order of the degree of $\G$. The proofs of the main results of Section \ref{SectionSchurHorn} are given in Section \ref{SHProofs}.

In Section \ref{SectionExploratory}, we conduct a detailed numerical evaluation of the power and limitations of convex invariant relaxations based on the inverse stability number (via the tractable approximation (\ref{EqnInvStRelaxation})) and the maximum cut value (via the tractable approximation (\ref{EqnMaxCutRelaxation})).  We do not provide precise theoretical guarantees due to a lack of a detailed characterization of the facial structure of the associated convex sets.  Nonetheless, we identify classes of graph edit distance problems for which these constraints produce high-quality lower bounds.  Specifically, we observe that a convex relaxation based on the Motzkin-Straus approximation of the inverse of the stability number provides useful lower bounds on graph edit distance if one of the graphs has the property that the removal of any edge increases the graph's stability number; graphs with such a property have been studied in the extremal graph theory literature \cite{jou2000number} and we refer the reader to Section \ref{SectionStabilityNumber} for further details.  Similarly, in Section \ref{SectionMaxcut}, we observe that a convex relaxation based on the Goemans-Williamson approximation of the maximum cut value produces effective lower bounds on the graph edit distance if the addition of any edge to one of the graphs increases that graph's maximum cut value; windmill graphs are a prominent example that possess such a property. In both Sections \ref{SectionStabilityNumber} and \ref{SectionMaxcut}, we present  empirical results that corroborate our observations.

In Section \ref{SectionRealExperiment} we demonstrate the utility of our framework in providing lower bounds on the average pairwise graph edit distance in two chemistry datasets consisting of a collection of molecules known as Alkanes and Polycyclic Aromatic Hydrocarbons (PAH).  The PAH dataset in particular consists of large structures for which exact computation of graph edit distance is prohibitively expensive.  The best-known upper bound on the average graph edit distance over all pairs of graphs in this dataset is 29.8, and to the best of our knowledge, the exact value of this quantity is not known \cite{bougleux2017graph}. Indeed, much of the literature featuring the PAH dataset aims at providing an upper bound on the average pairwise graph edit distance.  Our framework provides a lower bound of 21.6 on the average pairwise graph edit distance of PAH, which appears to be the best available bound to date.  In obtaining these results, we combine invariant convex sets based on the Schur-Horn orbitope, the Motzkin-Straus approximation of the inverse stability number, and the Goemans-Williamson approximation of the maximum-cut value.

\paragraph{\textbf{Notation}}
We denote the normal cone at a point $x \in \mathcal{C}$ of a closed convex set $\mathcal{C}$  by $\mathcal{N}_{\mathcal{C}}(x)$.  The projection map onto a subspace $\mathcal{E}\subset\mathbb{R}^n$ is denoted by $P_\mathcal{E}: \mathbb{R}^n \rightarrow \mathbb{R}^n$.  For a collection of subspaces $\mathcal{E}_i \subset \mathbb{R}^n, ~ i\in\{1,\dots,m\}$, the operator $\projop_{ij}: \mathbb{S}^{n} \rightarrow \mathbb{S}^{n}$ is defined as $\projop_{ij} := P_{\mathcal{E}_j} \otimes P_{\mathcal{E}_i}$, i.e., $\projop_{ij}(A) = P_{\mathcal{E}_i} A P_{\mathcal{E}_j}$. The restriction of a (usually self-adjoint) linear map $f: \R^n \to \R^n$ to an invariant subspace $\eig$ of $f$ is denoted by $f|_{\eig}: \eig \to \eig$.



\section{Theoretical Guarantees for the Schur-Horn Orbitope Constraint} \label{SectionSchurHorn}

In this section, we give theoretical guarantees that describe conditions under which employing the Schur-Horn orbitope as an invariant convex constraint set in \eqref{Optimization P} leads to the associated lower bound on the graph edit distance being \emph{tight}, i.e., the optimal value of \eqref{Optimization P} equals the graph edit distance.  Concretely, we consider conditions on a graph $\G$ and the structure of the edits that transform $\G$ to another graph $\G'$ so that $\ged(\G,\G') = \ged_{LB}(\G,\G'; \mathcal{C}_{\mathcal{SH}(\G)})$.  We begin with a description of our main theoretical results in Section~\ref{SectionMainResult}, some consequences of these results for specific graph families in Section~\ref{SHConseq}, and finally an experimental demonstration on the utility of the Schur-Horn orbitope on stylized problems in Section \ref{SectionSHExperiment}.  The proofs of the results of this section are deferred to Section \ref{SHProofs}.


As the normal cones at extreme points of the Schur-Horn orbitope play a prominent role in the optimality conditions of \eqref{Optimization P}, we state the relevant result here:
\begin{lemma} \label{LemmaNormalCone} \cite{candogan2018finding}
Let $\G$ be any unweighted graph with $m$ eigenvalues.  A matrix $W$ is an extreme point of $\mathcal{SH}(\G)$ if and only if it has the same eigenvalues as $\G$ (counting multiplicity). Further, the relative interior of the normal cone $\text{relint(}\mathcal{N}_{\mathcal{SH}(\G)}(W))$ at an extreme point $W$ consists of those matrices $Q$ that satisfy the following conditions:
\begin{enumerate}
\item $Q$ is simultaneously diagonalizable with $W$,
\item $\lambda_{\min}(Q|_{\mathcal{E}_i}) > \lambda_{\max}( Q|_{\mathcal{E}_{i+1}}) ~ \forall\,i \in\{1,\dots,m-1\}$,
\end{enumerate}
where $\mathcal{E}_i$ for $i\in\{1,\dots,m\}$ are eigenspaces of $W$ ordered such that the corresponding eigenvalues are sorted in a decreasing order.
\end{lemma}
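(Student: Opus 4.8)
The plan is to reduce both assertions to von Neumann's trace inequality (equivalently Ky Fan's maximal-eigenvalue principle) together with its equality case, exploiting that $\mathcal{SH}(\G) = \conv(\mathcal{O})$, where $\mathcal{O} := \{M \in \mathbb{S}^n : \lambda(M) = \lambda(A)\}$ is the (compact) orbit of an adjacency matrix $A$ of $\G$. The two facts I would invoke repeatedly are: (i) for $Q, W \in \mathbb{S}^n$, $\mathrm{Tr}(QW) \le \sum_k \lambda_k^\downarrow(Q)\,\lambda_k^\downarrow(W)$, with equality if and only if $Q$ and $W$ admit a common eigenbasis in which both spectra are sorted in decreasing order (in particular $QW = WQ$); and (ii) Milman's partial converse to Krein--Milman, which, since $\mathcal{O}$ is compact, forces every extreme point of $\conv(\mathcal{O})$ to lie in $\mathcal{O}$.

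For the extreme-point characterization, one direction is immediate from (ii): any extreme point lies in $\mathcal{O}$ and hence has the eigenvalues of $\G$. For the converse I would fix $W \in \mathcal{O}$ with distinct eigenvalues $\mu_1 > \cdots > \mu_m$ and eigenspaces $\mathcal{E}_1, \ldots, \mathcal{E}_m$, and exhibit a functional exposing $W$. Take $Q = \sum_{i=1}^m c_i P_{\mathcal{E}_i}$ with strictly decreasing scalars $c_1 > \cdots > c_m$. For any $X \in \mathcal{SH}(\G)$ one has $\lambda(X) \prec \lambda(A)$ (as noted after the definition of the orbitope), so $\mathrm{Tr}(QX) \le \sum_k \lambda_k^\downarrow(Q)\lambda_k^\downarrow(X) \le \sum_k \lambda_k^\downarrow(Q)\lambda_k^\downarrow(A)$, the first step by (i) and the second by majorization with $\lambda^\downarrow(Q)$ decreasing; and $W$ attains both with equality. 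Because the $c_i$ are strictly decreasing, equality in the second inequality forces $\lambda(X) = \lambda(A)$ (so $X \in \mathcal{O}$), and then equality in (i) forces the $c_i$-eigenspace $\mathcal{E}_i$ of $Q$ to coincide with the $\mu_i$-eigenspace of $X$, whence $X = \sum_i \mu_i P_{\mathcal{E}_i} = W$. Thus $W$ is the unique maximizer of $\mathrm{Tr}(Q\,\cdot)$, i.e.\ an exposed point, and therefore extreme.

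For the normal cone I would begin from the defining property $Q \in \mathcal{N}_{\mathcal{SH}(\G)}(W)$ iff $W$ maximizes $\mathrm{Tr}(Q\,\cdot)$ over $\mathcal{SH}(\G)$, i.e.\ iff $\mathrm{Tr}(QW) = \sum_k \lambda_k^\downarrow(Q)\lambda_k^\downarrow(W)$. Applying the equality case of (i) yields exactly two requirements: $Q$ commutes with $W$ (condition 1), and, writing $Q$ block-diagonally over the $\mathcal{E}_i$, the top $d_i := \dim \mathcal{E}_i$ eigenvalues of $Q$ must sit on $\mathcal{E}_1$, the next $d_2$ on $\mathcal{E}_2$, and so on, which is precisely $\lambda_{\min}(Q|_{\mathcal{E}_i}) \ge \lambda_{\max}(Q|_{\mathcal{E}_{i+1}})$ for all $i$ (consecutive dominance implying the global alignment by transitivity). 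This exhibits $\mathcal{N}_{\mathcal{SH}(\G)}(W)$ as the slice of the commutant subspace $\{Q : QW = WQ\}$ cut out by the concave constraints $g_i(Q) := \lambda_{\min}(Q|_{\mathcal{E}_i}) - \lambda_{\max}(Q|_{\mathcal{E}_{i+1}}) \ge 0$.

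Finally, to pin down the relative interior I would note that the affine hull of the normal cone is the full commutant: it contains the strictly feasible $W$ (around which every $g_i > 0$), which gives full dimension within the commutant, so the relative interior coincides with the topological interior there. Continuity of the $g_i$ shows $\{Q : g_i(Q) > 0 \ \forall i\}$ is open and contained in the cone, hence in its interior; conversely, if $g_i(Q) = 0$ for some $i$, perturbing the $\mathcal{E}_i$-block by $-t\,vv^T$ along a unit $\lambda_{\min}$-eigenvector $v$ of $Q|_{\mathcal{E}_i}$ strictly decreases $\lambda_{\min}(Q|_{\mathcal{E}_i})$ and drives $g_i$ below zero, placing $Q$ on the boundary. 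This yields the strict inequalities of condition 2. The main obstacle throughout is the careful handling of the equality case of (i) in the presence of eigenvalue multiplicities --- ensuring that whole eigenspaces, not merely individual eigenvectors, are forced to align --- together with checking that the affine hull of the normal cone is genuinely the entire commutant, so that the relative interior is exactly the strict-inequality region.
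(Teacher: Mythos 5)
The paper does not actually prove this lemma: it is imported verbatim from the authors' earlier work \cite{candogan2018finding}, so there is no in-paper proof to compare against. Your argument is correct and is essentially the standard route taken in that reference: extreme points via Milman's converse plus an exposing functional built from the spectral projections, and the normal cone via the equality case of the von Neumann/Fan trace inequality, with the relative interior obtained by showing the affine hull is the commutant of $W$ (using that $W$ itself is a strictly feasible point) and that the weak inequalities become strict exactly on the relative interior. The only step you gloss is the claim that equality in $\sum_k \lambda_k^\downarrow(Q)\lambda_k^\downarrow(X) \le \sum_k \lambda_k^\downarrow(Q)\lambda_k^\downarrow(A)$ forces $\lambda(X)=\lambda(A)$: strict decrease of the $c_i$ only pins the partial sums of $\lambda(X)$ at the block boundaries $d_1+\cdots+d_i$, and one must then use that $\lambda(A)$ is constant within each block together with the intra-block majorization inequalities (the first entry of block $i$ is at most $\mu_i$, entries are sorted, and the block sum equals $d_i\mu_i$) to conclude entrywise equality; this is true and routine, but worth spelling out since it is precisely where eigenvalue multiplicities enter. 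With that detail filled in, your proof is complete and self-contained.
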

From Lemma \ref{LemmaNormalCone} we observe that the relative interior of the normal cones of the Schur-Horn orbitope at extreme points are `larger' if the underlying graph $\G$ consists of few distinct eigenvalues.  This observation along with various properties of the eigenspaces of $\G$ play a prominent role in the analysis in this section.


\subsection{Main Results}\label{SectionMainResult}

We present here the statements of our main theoretical results concerning the performance of the Schur-Horn orbitope as a constraint set in \eqref{Optimization P}.  In addition to various structural properties of $\G$, our results are described in terms of a parameter $d$ that denotes the maximum number of deletions/additions of edges that are incident to any vertex of $\G$.  Informally, we should expect the Schur-Horn orbitope constraint to be effective in exactly computing the graph edit distance if a matrix representing the edits from $\G$ to $\G'$ has only a small amount of its energy on each of the eigenspaces of $\G$.  The reason behind this observation is that if the edits were largely concentrated in the eigenspaces of $\G$, then the eigenspaces of $\G'$ would be close to those of $\G$.  This would result in an identifiability problem from the perspective of the Schur-Horn orbitope, which is based purely on the spectral properties of $\G$.  To formalize this notion, we present the following definition which plays a key role in our analysis:
\begin{definition}
Let $\G$ be a graph on $n$ vertices with $m$ distinct eigenvalues.  Let $P_i, ~ i=1,\dots,m$ represent projection maps onto the eigenspaces of $\G$ indexed by decreasing order of the corresponding eigenvalues and let $\projop_{ii} = P_i \otimes P_i$.  Fix a positive integer $d$ and $\alpha\in[0,1]^m$.  Define the parameter $\xi(\alpha,d,\G)$ to be the smallest value
\begin{align*}
\norm{ [I - \sum_{i=1}^m \alpha_i \projop_{ii}] (W)  }_\infty \leq \xi(\alpha,d,\G) \norm{W}_\infty.
\end{align*}
for all $W \in \mathbb{S}^n$ with at most $d$ nonzero entries per row/column.
\end{definition}
\begin{remark} The maps $P_i$ represent projections onto eigenspaces of an adjacency matrix representing $\G$, but we simply refer to these as eigenspaces of $\G$ with an abuse of terminology.  The reason is that the quantity $\xi(\alpha,d,\G)$ is a graph parameter (for each fixed $\alpha, d$) and does not depend on a specific labeling of the vertices of $\G$.
\end{remark}
\begin{remark} The parameter $\xi(\alpha,d,\G)$ is a restricted version of the induced (entrywise) infinity norm $\norm{I - \sum_i^m \alpha_i \projop_{ii}}_{\infty\rightarrow \infty}$, with the key difference being that $\xi(\alpha,d,\G)$ computes the induced gain of the operator $I - \sum_i^m \alpha_i \projop_{ii}$ restricted to inputs that have at most $d$ nonzeros per row/column.
\end{remark}


The quantity $\xi(\alpha,d,\G)$ helps quantify the idea described previously about the energy of the edits not being confined excessively to the eigenspaces of $\G$.  As the specific edit pattern is not known in advance, this quantity is agnostic to the particular edits and is parametrized only in terms of the maximum number of edge deletions/additions that are incident to any vertex.  In our main results described next, larger values of $\xi$ make it harder to satisfy our sufficient conditions on tightness of our lower bounds.  As the value of $\xi$ depends on the selection of the parameter $\alpha$, our main results allow for flexibility in the choice of this parameter, and we describe in Section \ref{SHConseq} how specific choices lead to concrete consequences on the exactness of the relaxation \eqref{Optimization P} with the Schur-Horn orbitope constraint for various graph families.  We present next a result that establishes basic optimality conditions of the convex program \eqref{Optimization P}:

\begin{lemma} \label{LemmaOptimalityConditions}
Let $\G$ be a graph on $n$ vertices with $m$ distinct eigenvalues, and let $\G'$ be a graph that is obtained from $\G$ via edge deletions/additions such that each vertex is incident to at most $d$ edits.  Let $A, A+E^* \in \mathbb{S}^n$ represent the graphs $\G$ and $\G'$, respectively; that is, $E^*$ consists of at most $d$ nonzeros per row/column.  Let $\Omega \subset \mathbb{S}^n$ denote the subspace consisting of all matrices with nonzeros contained within the support of $E^*$.  Suppose a vector $\alpha\in [0,1]^m$ and a matrix $Q\in\mathbb{S}^{n}$ satisfy the following conditions:
\begin{enumerate}
	\item $\pomega( Q ) = \sign(E^*)$,
	\item $ || \pomegac(Q) ||_\infty < 1$, \label{LemmaOptimalityConditionPomegaC}
	\item $Q \in \text{relint} (\mathcal{N}_{\mathcal{SH}(\G)}(A) )$,
	\item $\xi(\alpha,d,\G)<1 $.
\end{enumerate}
Then we have that the convex relaxation \eqref{Optimization P} with the Schur-Horn orbitope constraint exactly computes the edit distance between $\G$ and $\G'$, i.e., $\ged(\G,\G') = \ged_{LB}(\G,\G'; \mathcal{C}_{\mathcal{SH}(\G)})$, with the optimal solution being unique and achieved at a matrix that specifies an optimal set of edits.
\end{lemma}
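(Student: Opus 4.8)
The plan is to certify that the pair $(X,E)=(A,E^*)$ is the unique optimizer of \eqref{Optimization P} with the constraint set $\mathcal{C}_{\sh(\G)}$, and then to sandwich its optimal value against $\ged(\G,\G')$. Since the equality constraint lets us eliminate $E=(A+E^*)-X$, the program reduces to $\min_{X\in\mathcal{C}_{\sh(\G)}}\tfrac12\norm{(A+E^*)-X}_1$, whose value at $X=A$ is $\tfrac12\norm{E^*}_1$; this equals the number of edits encoded by $E^*$ because $E^*$ is symmetric with entries in $\{-1,0,1\}$. As $(A,E^*)$ is feasible we get $\ged_{LB}(\G,\G';\mathcal{C}_{\sh(\G)})\le\tfrac12\norm{E^*}_1$, and since $E^*$ realizes an edit path from $\G$ to $\G'$ while \eqref{Optimization P} is always a lower bound, $\ged_{LB}\le\ged(\G,\G')\le\tfrac12\norm{E^*}_1$. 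It therefore suffices to prove the reverse inequality $\tfrac12\norm{E^*}_1\le\ged_{LB}$ (optimality of $(A,E^*)$), which collapses the whole chain to equalities and identifies $E^*$ as an optimal edit.

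For optimality I would use $Q$ as a dual certificate. Conditions (1)--(2) assert exactly that $Q$ agrees with $\sign(E^*)$ on $\Omega$ and satisfies $\norm{\pomegac(Q)}_\infty<1$, so $Q$ is a subgradient of $\norm{\cdot}_1$ at $E^*$; in particular $\norm{Q}_\infty\le 1$ and $\langle Q,E^*\rangle=\norm{E^*}_1$. For any feasible $X$, writing $\Delta=X-A$ and using $\norm{Y}_1\ge\langle Q,Y\rangle$ with $Y=E^*-\Delta$,
\begin{equation*}
\tfrac12\norm{(A+E^*)-X}_1 \;\ge\; \tfrac12\langle Q,E^*-\Delta\rangle \;=\; \tfrac12\norm{E^*}_1-\tfrac12\langle Q,\Delta\rangle.
\end{equation*}
Condition (3) places $Q\in\mathcal{N}_{\sh(\G)}(A)$, so $\langle Q,\Delta\rangle=\langle Q,X-A\rangle\le 0$ for every $X\in\mathcal{C}_{\sh(\G)}$; hence the objective is at least $\tfrac12\norm{E^*}_1$, which establishes optimality and the claimed value.

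The bulk of the work is uniqueness, and this is where the interplay between the combinatorial support $\Omega$ and the spectral structure of $\G$ enters. Suppose $X=A+\Delta$ is also optimal; then both displayed inequalities are tight. Tracking the first one entry by entry, the strict bound $\norm{\pomegac(Q)}_\infty<1$ from condition (2) forces $(1-|Q_{ij}|)|\Delta_{ij}|=0$ on $\Omega^c$, so $\pomegac(\Delta)=0$ and $\Delta$ is supported on $\Omega$ (hence has at most $d$ nonzeros per row/column); tightness of the second inequality yields $\langle Q,\Delta\rangle=0$. I would then invoke Lemma~\ref{LemmaNormalCone}: since $Q\in\mathrm{relint}(\mathcal{N}_{\sh(\G)}(A))$, it is simultaneously diagonalizable with $A$ and has strictly separated eigenvalue blocks across the eigenspaces $\eig_1,\dots,\eig_m$, while $A$ is an extreme point of $\sh(\G)$. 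These strict eigengaps should make $A$ the \emph{unique} maximizer of $\langle Q,\cdot\rangle$ over $\sh(\G)$, so $\langle Q,A+\Delta\rangle=\langle Q,A\rangle$ with $A+\Delta\in\sh(\G)$ can hold only if $\Delta=0$.

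Condition (4), $\xi(\alpha,d,\G)<1$, supplies the quantitative handle behind this transversality: it guarantees that $I-\sum_i\alpha_i\projop_{ii}$ strictly contracts (in the $\infty$-norm) every matrix with at most $d$ nonzeros per row/column, so no nonzero $\Delta$ supported on $\Omega$ can sit entirely off the eigenspace-diagonal blocks $\projop_{ii}$; any surviving perturbation is forced to carry genuine block-diagonal content, which the strictly-ordered spectrum of $Q$ together with the majorization constraint $A+\Delta\in\sh(\G)$ then rules out. The step I expect to be the main obstacle is precisely this reconciliation, namely verifying that ``$\Delta$ supported on $\Omega$, $\langle Q,\Delta\rangle=0$, and $A+\Delta\in\sh(\G)$'' force $\Delta=0$: it is here that the purely spectral description of $\mathcal{N}_{\sh(\G)}(A)$ must be matched against the purely combinatorial sparsity of the edits, and where $\xi<1$ does the essential work.
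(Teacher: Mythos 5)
Your optimality half is essentially the paper's proof, worked out in full: the paper disposes of it in one sentence (``a direct application of the KKT conditions'' using conditions 1--3), and your subgradient-plus-normal-cone certificate, together with the sandwich $\ged_{LB}\le\ged\le\tfrac12\norm{E^*}_1$, is exactly that application and is correct. The divergence is in uniqueness, which the paper also does not prove in-line but outsources to ``standard arguments regarding transverse intersections of the subspace $\Omega$ and the invariant spaces of $\G$.'' That transversality is precisely where condition (4) enters, and its role is sharper than your closing paragraph suggests: the tangent space at $A$ to the isospectral orbit is the block-off-diagonal space $T=\{\Delta\in\mathbb{S}^n : \projop_{ii}(\Delta)=0 \ \forall i\}$, and if $\Delta\in\Omega\cap T$ then $\bigl(I-\sum_{i}\alpha_i\projop_{ii}\bigr)(\Delta)=\Delta$, so $\xi(\alpha,d,\G)<1$ gives $\norm{\Delta}_\infty\le\xi(\alpha,d,\G)\norm{\Delta}_\infty$, i.e.\ $\Delta=0$. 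In other words, $\xi<1$ is exactly the statement $\Omega\cap T=\{0\}$; your paraphrase (``no nonzero $\Delta$ supported on $\Omega$ can sit entirely off the eigenspace-diagonal blocks'') is this fact, but you then leave open how the surviving block-diagonal content is excluded, and you explicitly flag that reconciliation as the main obstacle. So as written, the uniqueness crux is acknowledged rather than proved.

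The good news is that your alternative route --- $A$ is the \emph{unique} maximizer of $\langle Q,\cdot\rangle$ over $\mathcal{C}_{\sh(\G)}$ --- is completable and self-contained, and note it would use only conditions 1--3. The set of maximizers is an exposed face of $\mathcal{C}_{\sh(\G)}$, hence the convex hull of its extreme points, which are extreme points of the orbitope, i.e., matrices $M$ with $\lambda(M)=\lambda(A)$ by Lemma~\ref{LemmaNormalCone}. For such $M$, von Neumann's trace inequality gives $\langle Q,M\rangle\le\sum_k\lambda_k(Q)\lambda_k(A)$ (both spectra sorted decreasingly), and $A$ attains this bound since $Q$ and $A$ commute and the strict gaps $\lambda_{\min}(Q|_{\eig_i})>\lambda_{\max}(Q|_{\eig_{i+1}})$ order $Q$'s spectrum consistently with $A$'s. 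Equality for $M$ forces a common orthonormal eigenbasis pairing both sorted spectra; the strict gaps then force the top $\dim(\eig_1)$ common eigenvectors to span $\eig_1$, so the top eigenspace of $M$ is $\eig_1$, and inductively $M=A$. Be aware that this implication is \emph{not} a generic convex-geometry fact --- $Q\in\text{relint}(\mathcal{N}_{\mathcal{C}}(A))$ at an extreme point does not force unique exposure for arbitrary convex bodies (non-exposed extreme points give counterexamples) --- so the von Neumann equality analysis genuinely must be carried out; it is the content behind your ``should.'' Interestingly, completing this route renders condition (4) superfluous for the conclusion of this lemma (it remains essential downstream, in Lemma~\ref{LemmaNormBounds} and Theorem~\ref{TheoremMainTheorem}, to construct $Q$), whereas the paper's cited transversality argument is the one that consumes $\xi<1$; your final paragraph conflates these two distinct mechanisms.
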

\begin{proof}
From the first three conditions, one can conclude that the pair $(A,E^*)$ is an optimal solution of \eqref{Optimization P} by a direct application of the KKT conditions.  Uniqueness can be established by standard arguments regarding transverse intersections of the subspace $\Omega$ and the invariant spaces of $\G$; see  \cite[Proposition 2]{chandrasekaran2012convex} and \cite[Lemma 6]{chen2013low}.
\end{proof}

Conditions 1, 2, and 3 of this lemma essentially require that the subdifferential at a matrix specifying the edits with respect to the $\ell_1$ norm has a nonempty intersection with the relative interior of the normal cone at an adjacency matrix representing $\G$ with respect to the Schur-Horn orbitope.  In papers on the topic of low-rank matrix completion and matrix decomposition \cite{candes2009exact, chandrasekaran2011rank}, a convenient approach to ensuring that such types of conic intersection conditions can be satisfied is based on requiring that nullspace (the eigenspace corresponding to an eigenvalue of zero) of the low-rank matrix is suitably `incoherent', i.e., that there are no elements of this nullspace with energy concentrated in a single location.  In our context, all of the eigenspaces of $\G$ play a role rather than just a single distinguished eigenspace, and accordingly we describe next a weighted form of an incoherence-type condition:
\begin{definition}
Let $\G$ be a graph on $n$ nodes with $m$ distinct eigenvalues and let $P_1,\dots,P_m  \in \mathbb{S}^n$ denote the projection matrices onto the associated eigenspaces indexed by decreasing order of the corresponding eigenvalues.  Fix any $\gamma \in \mathbb{R}^m$.  We define the parameter $\rho(\gamma,\G)$ as follows:
\begin{align*}
\rho(\gamma,\G) := \norm{\sum_{i=1}^m \gamma_i P_i}_\infty.
\end{align*}
\end{definition}
Here the matrix $\|\cdot\|_\infty$ norm is the largest entry of the argument in magnitude.  In the literature on inverse problems involving low-rank matrices, one typically considers the infinity norm of the projection map onto the nullspace as well as variants of this quantity.  Thus, in this sense the parameter $\rho(\gamma,\G)$ is a weighted generalization that is more suited to our setup.  We state next our main theorem in terms of sufficient conditions involving the two parameters we have introduced in this section:

\begin{theorem}\label{TheoremMainTheorem} Let $\G$ be a graph on $n$ vertices with $m$ distinct eigenvalues, and let $\G'$ be a graph that is obtained from $\G$ via edge deletions/additions such that each vertex is incident to at most $d$ edits.  Suppose the following two conditions are satisfied for some $\gamma\in \mathbb{R}^m$ and $\alpha\in [0,1]^m$:
\begin{enumerate}
\item $2 \, \xi(\alpha,d,\G) + \rho(\gamma,\G) < 1$, \label{TheoremConditionInfinityNorm}
\item $  \frac{(\alpha_{i}+\alpha_{i+1}) \,(1+ \rho(\gamma,\G))\,d}{1-\xi(\alpha,d,\G)} < \gamma_{i+1}-\gamma_i, ~ \forall\,i\,\in\,\{1,\dots,m-1\}$.\label{TheoremConditionEigenSep}
\end{enumerate}
Then the convex relaxation \eqref{Optimization P} with the Schur-Horn orbitope constraint exactly computes the edit distance between $\G$ and $\G'$, i.e., $\ged(\G,\G') = \ged_{LB}(\G,\G'; \mathcal{C}_{\mathcal{SH}(\G)})$, with the optimal solution being unique and achieved at a matrix that specifies an optimal set of edits.
\end{theorem}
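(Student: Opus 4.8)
The plan is to reduce Theorem \ref{TheoremMainTheorem} to Lemma \ref{LemmaOptimalityConditions} by exhibiting a single vector $\alpha$ together with a dual certificate $Q$ satisfying the four hypotheses of that lemma; the two conditions of the theorem are engineered to be exactly what is needed for the two nontrivial ones. Condition \ref{TheoremConditionInfinityNorm} immediately yields $\xi(\alpha,d,\G)<1$ (as $\rho(\gamma,\G)\ge 0$), which is both hypothesis 4 of Lemma \ref{LemmaOptimalityConditions} and the ingredient that makes the construction below well-posed. Keeping $\alpha$ fixed, I would build the certificate in the explicit form
\[
Q = \sum_{i=1}^m \alpha_i \projop_{ii}(Y) \;-\; \sum_{i=1}^m \gamma_i P_i,
\]
where $Y\in\mathbb{S}^n$ is supported on $\Omega$. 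Both summands are simultaneously diagonalizable with $A$, so the first requirement of Lemma \ref{LemmaNormalCone} holds automatically. To force $\pomega(Q)=\sign(E^*)$ I would determine $Y$ by solving, over matrices supported on $\Omega$,
\[
\pomega\Big(\textstyle\sum_{i=1}^m \alpha_i \projop_{ii}(Y)\Big) = \sign(E^*) + \pomega\Big(\textstyle\sum_{i=1}^m \gamma_i P_i\Big) =: S'.
\]
Writing $\sum_i \alpha_i \projop_{ii} = I - (I - \sum_i \alpha_i \projop_{ii})$ and restricting to $\Omega$ turns this into the fixed-point equation $(\mathrm{Id}_\Omega - \mathcal{H})(Y)=S'$ with $\mathcal{H} := \pomega\,(I - \sum_i \alpha_i \projop_{ii})\,\pomega$. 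Since matrices supported on $\Omega$ have at most $d$ nonzeros per row/column, the definition of $\xi$ gives $\norm{\mathcal{H}(Z)}_\infty \le \xi(\alpha,d,\G)\,\norm{Z}_\infty$; as $\xi(\alpha,d,\G)<1$, the Neumann series $Y = \sum_{k\ge 0}\mathcal{H}^k(S')$ converges, is supported on $\Omega$, and satisfies $\norm{Y}_\infty \le \norm{S'}_\infty/(1-\xi(\alpha,d,\G)) \le (1+\rho(\gamma,\G))/(1-\xi(\alpha,d,\G))$, using $\norm{\sign(E^*)}_\infty\le 1$ and $\norm{\sum_i\gamma_i P_i}_\infty=\rho(\gamma,\G)$.

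Next I would verify the off-support bound (hypothesis 2 of Lemma \ref{LemmaOptimalityConditions}). Because $Y$ vanishes off $\Omega$, we have $\pomegac(\sum_i\alpha_i\projop_{ii}(Y)) = -\pomegac\big((I - \sum_i\alpha_i\projop_{ii})(Y)\big)$, whose $\norm{\cdot}_\infty$ is at most $\xi(\alpha,d,\G)\norm{Y}_\infty$ by the definition of $\xi$; adding the contribution $\rho(\gamma,\G)$ of the $\gamma$-term and inserting the bound on $\norm{Y}_\infty$ gives
\[
\norm{\pomegac(Q)}_\infty \le \rho(\gamma,\G) + \frac{\xi(\alpha,d,\G)\,(1+\rho(\gamma,\G))}{1-\xi(\alpha,d,\G)} = \frac{\rho(\gamma,\G)+\xi(\alpha,d,\G)}{1-\xi(\alpha,d,\G)},
\]
which is strictly below $1$ precisely when $2\xi(\alpha,d,\G)+\rho(\gamma,\G)<1$, i.e.\ condition \ref{TheoremConditionInfinityNorm}.

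It remains to check the spectral-separation half of Lemma \ref{LemmaNormalCone}. Restricting to eigenspaces, $Q|_{\eig_i} = \alpha_i\,\projop_{ii}(Y)|_{\eig_i} - \gamma_i I$. Bounding the operator norm of every eigenspace block of the $d$-sparse matrix $Y$ by $\norm{\projop_{ii}(Y)} \le \norm{Y}\le d\,\norm{Y}_\infty$ yields $\lambda_{\min}(Q|_{\eig_i}) \ge -\gamma_i - \alpha_i d\norm{Y}_\infty$ and $\lambda_{\max}(Q|_{\eig_{i+1}}) \le -\gamma_{i+1} + \alpha_{i+1} d\norm{Y}_\infty$ (with $\gamma$ increasing, $-\gamma$ decreases, matching the decreasing-spectrum ordering of Lemma \ref{LemmaNormalCone}). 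Hence $\lambda_{\min}(Q|_{\eig_i}) > \lambda_{\max}(Q|_{\eig_{i+1}})$ is implied by $\gamma_{i+1}-\gamma_i > (\alpha_i+\alpha_{i+1})d\norm{Y}_\infty$, and substituting the bound on $\norm{Y}_\infty$ reduces this to condition \ref{TheoremConditionEigenSep}. With all four hypotheses of Lemma \ref{LemmaOptimalityConditions} now in force, its conclusion delivers both tightness and uniqueness.

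The main obstacle, and the step I would treat most carefully, is the certificate construction itself: one must reconcile the competing demands that $Q$ be simultaneously diagonalizable with $A$ with suitably separated block spectra (so that $Q$ lies in the relative interior of the normal cone) while matching $\sign(E^*)$ exactly on $\Omega$, and this reconciliation rests entirely on the contraction $\norm{\mathcal{H}}_\infty<1$, which is where $\xi(\alpha,d,\G)<1$ is consumed. The two quantitative estimates feeding the final inequalities — converting the entrywise control encoded by $\xi$ into the off-support bound, and bounding the operator norm of an eigenspace block of a $d$-sparse matrix by $d\norm{Y}_\infty$ — are precisely where the constants $2$ and $d$ in conditions \ref{TheoremConditionInfinityNorm} and \ref{TheoremConditionEigenSep} arise; keeping the sign of the $\gamma$-shift consistent with the ordering convention of Lemma \ref{LemmaNormalCone} is the one easy-to-miss bookkeeping detail.
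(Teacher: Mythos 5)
Your proposal is correct and follows essentially the same route as the paper: your fixed-point construction $Y = (\mathrm{Id}_\Omega - \mathcal{H})^{-1}(S')$ with $\mathcal{H} = \pomega\,(I - \sum_{i}\alpha_i \projop_{ii})\,\pomega$ is precisely the Neumann-series mechanism packaged inside the paper's operator $\operator$, and your inline estimates reproduce the bounds of Lemma \ref{LemmaNormBounds} and the verification of the sufficient conditions of Lemma \ref{LemmaOptimalityConditionsSufficient}, arriving at the identical inequalities $\frac{\xi(\alpha,d,\G)+\rho(\gamma,\G)}{1-\xi(\alpha,d,\G)} < 1$ and $\frac{(\alpha_i+\alpha_{i+1})(1+\rho(\gamma,\G))\,d}{1-\xi(\alpha,d,\G)} < \gamma_{i+1}-\gamma_i$. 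The only deviation is your sign convention $Q = \sum_i \alpha_i\projop_{ii}(Y) - \sum_i \gamma_i P_i$ (versus the paper's $Q = R + \Delta$ with $R = \sum_i \gamma_i P_i$), an equivalent bookkeeping choice that, as you note, keeps the block spectra of $Q$ aligned with the decreasing-eigenvalue ordering stated in Lemma \ref{LemmaNormalCone}.
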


%
%

This theorem states that the relaxation \eqref{Optimization P} with the Schur-Horn orbitope constraint set succeeds in calculating the graph edit distance exactly if 1) $d$ is small enough, 2) there exists a vector $\alpha$ with small entries such that $\xi(\alpha,d,\G)$ is also suitably small, and 3) there exists an ordered vector $\gamma$ with well-separated entries that yields a small value of $\rho(\gamma,\G)$.  As discussed in the next subsection, graphs with a small number of well-separated eigenvalues offer an ideal candidate.  Specifically, for such graph families, we give concrete consequences in terms of bounds on the maximum number $d$ of edits per vertex via particular choices for $\alpha$ and $\gamma$ in Theorem~\ref{TheoremMainTheorem}.


\subsection{Consequences for Graph Families with Few Eigenvalues} \label{SHConseq}
Theorem $\ref{TheoremMainTheorem}$ constitutes our most general result on the tightness of the Schur-Horn orbitope constraint in computing the graph edit distance when employed as a constraint set in the context of \eqref{Optimization P}.  The generality of the result stems from the wide range of flexibility provided by the vectors $\gamma$ and $\alpha$. In Corollary \ref{CorollaryMain}, we consider specific choices of these parameters to obtain concrete bounds in terms of graph parameters that can be computed easily:

\begin{corollary}$\label{CorollaryMain}$
Let $\G$ be a vertex-transitive graph on $n$ vertices consisting of $m$ distinct eigenvalues, and let $\kappa$ denote the multiplicity of the eigenvalue with the second-highest multiplicity.  Suppose $\G'$ is a graph on $n$ vertices that can be obtained from $\G$ with the addition or removal of at most $d$ edges incident to each vertex of $\G$.  Then there exists a constant $c$ depending only on $m$ so that the optimal value $\ged_{LB}(\G,\G'; \mathcal{C}_{\mathcal{SH}(\G)})$ of \eqref{Optimization P} equals $\ged(\G,\G')$ provided
\begin{align*}
d \leq c \frac{n}{\kappa}.
\end{align*}
\end{corollary}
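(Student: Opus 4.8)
The plan is to deduce the corollary from Theorem~\ref{TheoremMainTheorem} by producing explicit vectors $\alpha\in[0,1]^m$ and $\gamma\in\mathbb{R}^m$ that satisfy its two hypotheses whenever $d\le cn/\kappa$. Everything hinges on one structural consequence of vertex-transitivity. If $\Pi$ is the permutation matrix of any automorphism of $\G$, then $\Pi$ commutes with the adjacency matrix, hence with every spectral projection $P_i$, and transitivity of the automorphism group on the vertices forces each $P_i$ to have constant diagonal, so $(P_i)_{aa}=\trace(P_i)/n=\mu_i/n$, where $\mu_i=\dim\mathcal{E}_i$. Since each $P_i$ is a symmetric idempotent, Cauchy--Schwarz gives $|(P_i)_{ab}|\le\sqrt{(P_i)_{aa}(P_i)_{bb}}=\mu_i/n$, whence $\norm{P_i}_\infty=\mu_i/n$. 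Thus the coherence of every eigenspace is governed purely by its multiplicity, which is exactly the quantity the corollary is phrased in terms of.

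Let $i^\star$ index the eigenspace of largest multiplicity and set $Q:=I-P_{i^\star}=\sum_{i\neq i^\star}P_i$, a projection of rank $r:=n-\mu_{i^\star}$ with $\norm{Q}_\infty\le r/n$. The choice I would make is $\alpha_{i^\star}=1$ and $\alpha_i=0$ for $i\neq i^\star$, together with a strictly increasing $\gamma$ that is flat (with infinitesimal gaps) on each side of $i^\star$, satisfies $\gamma_{i^\star}=0$, and jumps by an amount of order $d$ across $i^\star$. With these choices the two hypotheses of Theorem~\ref{TheoremMainTheorem} decouple: only the two gaps adjacent to $i^\star$ carry a nonzero numerator in condition~\ref{TheoremConditionEigenSep} (elsewhere $\alpha_i+\alpha_{i+1}=0$ and the requirement reduces to strict monotonicity of $\gamma$), while centering $\gamma$ at $i^\star$ removes the dominant multiplicity $\mu_{i^\star}$ from the sum defining $\rho$.

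The quantitative core is to bound $\xi(\alpha,d,\G)$ and $\rho(\gamma,\G)$. For $\xi$, observe that $[I-\projop_{i^\star i^\star}](W)=W-P_{i^\star}WP_{i^\star}=QW+WQ-QWQ$, which I would estimate against any $W$ having at most $d$ nonzeros per row and column. The single-sided terms obey $\norm{QW}_\infty,\norm{WQ}_\infty\le\sqrt{dr/n}\,\norm{W}_\infty$ via one Cauchy--Schwarz using column-sparsity together with $\sum_c Q_{ac}^2=Q_{aa}=r/n$, and the cross term obeys $\norm{QWQ}_\infty\le(dr/n)\,\norm{W}_\infty$ via a Cauchy--Schwarz over $\mathrm{supp}(W)$ that splits $\sum_{(c,d)\in\mathrm{supp}(W)}|Q_{ac}||Q_{db}|$ and exploits row- and column-sparsity simultaneously. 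Hence $\xi(\alpha,d,\G)\lesssim\sqrt{dr/n}$. For $\rho$, using $\gamma_{i^\star}=0$ and $\norm{P_i}_\infty=\mu_i/n$ gives $\rho(\gamma,\G)\le\sum_{i\neq i^\star}|\gamma_i|\,\mu_i/n\lesssim d\,r/n$, because the $|\gamma_i|$ are $O(d)$. Condition~\ref{TheoremConditionInfinityNorm} then holds once $dr/n$ is below an absolute constant, and condition~\ref{TheoremConditionEigenSep} is met by taking each of the two jumps of $\gamma$ equal to, say, $5d/(1-\xi)$, which keeps $|\gamma_i|=O(d)$ and is consistent with the $\rho$ estimate. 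Finally, the passage to the stated bound is where the second-highest multiplicity enters: every multiplicity other than $\mu_{i^\star}$ is at most $\kappa$ and there are $m-1$ of them, so $r=\sum_{i\neq i^\star}\mu_i\le(m-1)\kappa$; therefore $d\le cn/\kappa$ gives $dr/n\le c(m-1)$, and choosing $c$ proportional to $1/(m-1)$ drives $dr/n$ below the required threshold, with $c$ depending only on $m$.

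I expect the delicate step to be the sharp bound on $\xi$, and specifically the cross term $QWQ$. A naive term-by-term Cauchy--Schwarz produces only an estimate of order $\sqrt{d/n}\,r$, which degrades the final requirement to $d\lesssim n/\kappa^2$; obtaining the correct $n/\kappa$ scaling requires the symmetric Cauchy--Schwarz over $\mathrm{supp}(W)$ described above, which uses sparsity in both indices at once and reduces the two stray factors of $\sqrt{r}$ to a single factor of $r$ multiplied by $d/n$. A secondary point requiring care is verifying that the $\gamma$-jumps demanded by condition~\ref{TheoremConditionEigenSep}, which necessarily scale like $d$, do not inflate $\rho$ beyond $O(dr/n)$; this is precisely the reason for centering $\gamma$ so that $\gamma_{i^\star}=0$, thereby excising the large multiplicity $\mu_{i^\star}$ from the bound on $\rho$.
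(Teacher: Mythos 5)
Your proposal is correct and takes essentially the same route as the paper's proof: the paper likewise specializes Theorem~\ref{TheoremMainTheorem} by setting $\alpha$ to the indicator of the largest-multiplicity eigenspace, choosing an increasing $\gamma$ whose free additive shift zeroes out that eigenspace's contribution to $\rho(\gamma,\G)$, and exploiting vertex-transitivity (constant diagonals of the spectral projectors, so $\norm{P_i}_\infty \leq \dim(\mathcal{E}_i)/n$) together with Cauchy--Schwarz against $d$-sparsity to obtain $\xi(\alpha,d,\G) \lesssim \sqrt{\kappa d/n} + \kappa d/n$, with all remaining multiplicities bounded by $\kappa$. The only differences are bookkeeping: the paper expands $I - \projop_{\ell\ell}$ term-by-term over the individual projectors $P_i$, $i \neq \ell$ (each of coherence at most $\kappa/n$, at most $(m-1)^2$ cross terms $P_i \pomega(W) P_j$) rather than aggregating them into the single projector $Q = I - P_{i^\star}$, and it takes $\gamma$-gaps of order $c_1 n/\kappa$ rather than order $d$ --- both variants yield the same $d \lesssim n/\kappa$ threshold with constants depending only on $m$.
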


The particular dependence on the multiplicity of the eigenvalue with second-largest multiplicity is due to the choices of $\alpha$ and $\gamma$ in Theorem \ref{TheoremMainTheorem} that we have employed in our proof; see Section \ref{SHProofs} for more details.  In the sequel we give consequences of this result for specific graph families in which the number of distinct eigenvalues is small (for example, three or four).  In the context of such graphs, the relaxation \eqref{Optimization P} is tight even when the number of edits per vertex is large so long as the value of $\kappa$ is suitably small.  Indeed, for several graph families we observe that Corollary \ref{CorollaryMain} produces `order-optimal' bounds as the largest value of $d$ that is allowed is on the same order as the degree of the underlying graphs.

\begin{figure}[hbt]
\centering
\subcaptionbox{\label{FigureHamming}}{\includegraphics[width=0.22\textwidth]{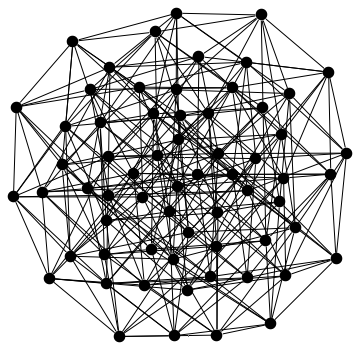}}  \hspace{0.35in}
\subcaptionbox{\label{FigureT9}}{\includegraphics[width=0.22\textwidth]{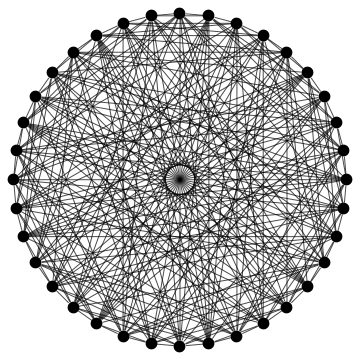}} \hspace{0.35in}%
\subcaptionbox{\label{FigureGQ24}}{\includegraphics[width=0.22\textwidth]{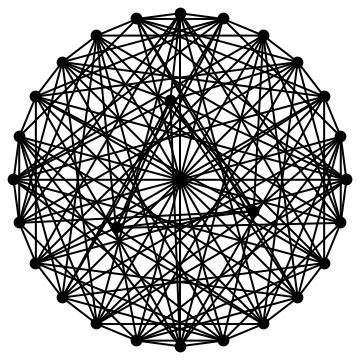}}
\caption{From left to right:  Hamming graph H(3,4), 9-Triangular graph, generalized quadrangle-(2,4) graph.}
\end{figure}

\paragraph{Johnson Graphs} A Johnson graph $J(k,\ell)$ with $\ell>0$ is a graph on $n=\binom{k}{\ell}$ vertices that correspond to the $\ell$-element subsets of a set of $k$ elements.  Two vertices of a Johnson graph are connected if the corresponding subsets of these vertices contain $\ell-1$ common elements. The Johnson graph $J(k,\ell)$ is vertex-transitive and contains $\ell+1$ distinct eigenvalues. For $k\geq 2\ell$ and $j\in\{0,\dots,\ell\}$, the multiplicity of its $j$'th eigenvalue is $\binom{k}{j}$ -$\binom{k}{j-1}$ for $j>0$ and one for $j=0$.  For small values of $\ell$, the multiplicity of the second most repeated eigenvalue is about $k^{\ell-1}$.
As a result, for small fixed values of $\ell$, Corollary \ref{CorollaryMain} states that the convex relaxation \eqref{Optimization P} is tight provided
\begin{align*}
d \lesssim n^{\frac{1}{\ell}}.
\end{align*}



\paragraph{Kneser Graphs} A Kneser graph $K(k,\ell)$ with $\ell>0$ shares certain aspects with Johnson graphs.  Specifically, the vertices of $K(k,\ell)$ coincide with the $\ell$-element subsets of a set of $k$ elements, as with Johnson graphs.  However, two vertices of a Kneser graph are connected if the subsets corresponding to these vertices are disjoint. Kneser graphs are vertex-transitive, and their eigenvalues exhibit the same multiplicities as those of the Johnson graphs $J(k,\ell)$. As a result, for small fixed values of $\ell$, Corollary \ref{CorollaryMain} implies that the relaxation \eqref{Optimization P} is tight provided:
\begin{align*}
d\lesssim n^{\frac{1}{\ell}}.
\end{align*}

\paragraph{Hamming Graphs} A Hamming graph $H(\ell,q)$ consists of $q^\ell$ vertices (see Figure \ref{FigureHamming} for a depiction of $H(3,4)$). Each vertex of $H(\ell,q)$ corresponds to a sequence of length $\ell$ from a set with $q$ distinct elements. Two vertices are connected if their associated sequences differ in exactly one coordinate, i.e., their Hamming distance is equal to 1.  Hamming graphs are vertex-transitive, and the spectrum of $H(\ell,q)$ consists of $\ell+1$ distinct eigenvalues with multiplicities $\binom{\ell}{i} (q-1)^i$, $i\in\{0,\dots,\ell\}$.  Therefore, for a small fixed value of $\ell$, Corollary \ref{CorollaryMain} states that the relaxation \eqref{Optimization P} is tight provided:
\begin{align*}
d \lesssim n^{\frac{1}{\ell}}.
\end{align*}


\paragraph{Vertex-Transitive Strongly Regular Graphs} A strongly regular graph on $n$ vertices with degree $r$ is defined by the property that every pair of adjacent vertices has $d_a$ common neighbors and every pair of nonadjacent vertices has $d_{na}$ common neighbors.  Such graphs are generally denoted $srg(n,r,d_a,d_{na})$.  Due to their rich algebraic structure, strongly regular graphs have only three distinct eigenvalues with multiplicities equal to one and $\frac{1}{2 } \Big[ (n-1) \pm \frac{2r + (n-1)(d_a-d_{na}) }{ \sqrt{ (d_a-d_{na})^2 + 4 (r-d_{na})   }  }  \Big]$. Furthermore, many strongly regular graphs are also vertex-transitive and as a result, our Corollary \ref{CorollaryMain} is applicable.  We highlight two prominent examples:
\begin{itemize}
\item A $k$-Triangular graph $T_k$ on $n= \binom{k}{2}$ vertices is a vertex-transitive strongly regular graph with parameters $srg(k(k-1)/2,2(k-2),k-2,4)$ (in fact, $T_k$ is also isomorphic to the Johnson graph $J(k,2)$); see Figure \ref{FigureT9} for the $9$-Triangular graph.  Corollary \ref{CorollaryMain} states that the convex relaxation (\ref{Optimization P}) is tight provided:
\begin{align*}
d \lesssim n^{\frac{1}{2}}.
\end{align*}
Incidentally, the degree of $T_k$ also scales as $n^{\frac{1}{2}}$; as a result, Corollary \ref{CorollaryMain} is tight for this family up to constant factors.

\item A generalized quadrangle is an incidence relation satisfying certain geometric axioms on points and lines. A generalized quadrangle of order $(s,t)$ gives rise to a strongly regular graph with parameters $srg((s+1)(st+1), s(t+1), s-1, t+1)$ denoted by $GQ(s,t)$ on $n = (s+1)(st+1)$ vertices -- see Figure \ref{FigureGQ24} for an illustration of the vertex-transitive graph $GQ(2,4)$.  Considering generalized quadrangle graphs $GQ(s,s^2)$ when they are vertex-transitive, Corollary \ref{CorollaryMain} implies that the relaxation \eqref{Optimization P} is tight provided
\begin{align*}
d \lesssim n^{\frac{1}{4}}.
\end{align*}
\end{itemize}
In Section \ref{SectionSHExperiment} we demonstrate the utility of our framework via numerical experiments on edit distance problems involving the graphs $T_9$ and $GQ(2,4)$.


\begin{figure}[hbt]
\centering
\subcaptionbox{\label{FigureT9SuccessProb}}{\includegraphics[width=0.45\textwidth]{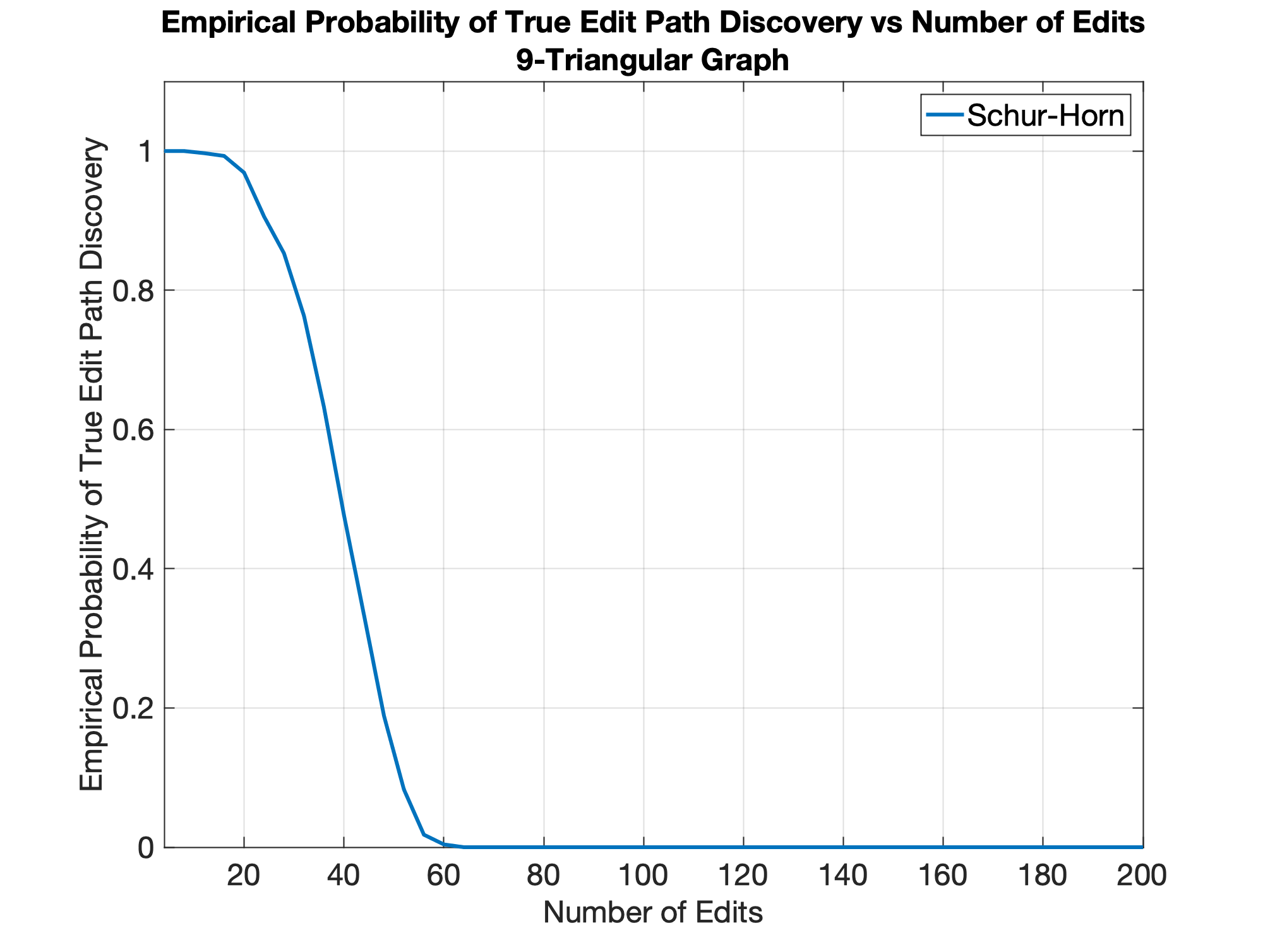}} \hspace{0.1in}%
\subcaptionbox{\label{FigureT9AverageRatio}}{\includegraphics[width=0.45\textwidth]{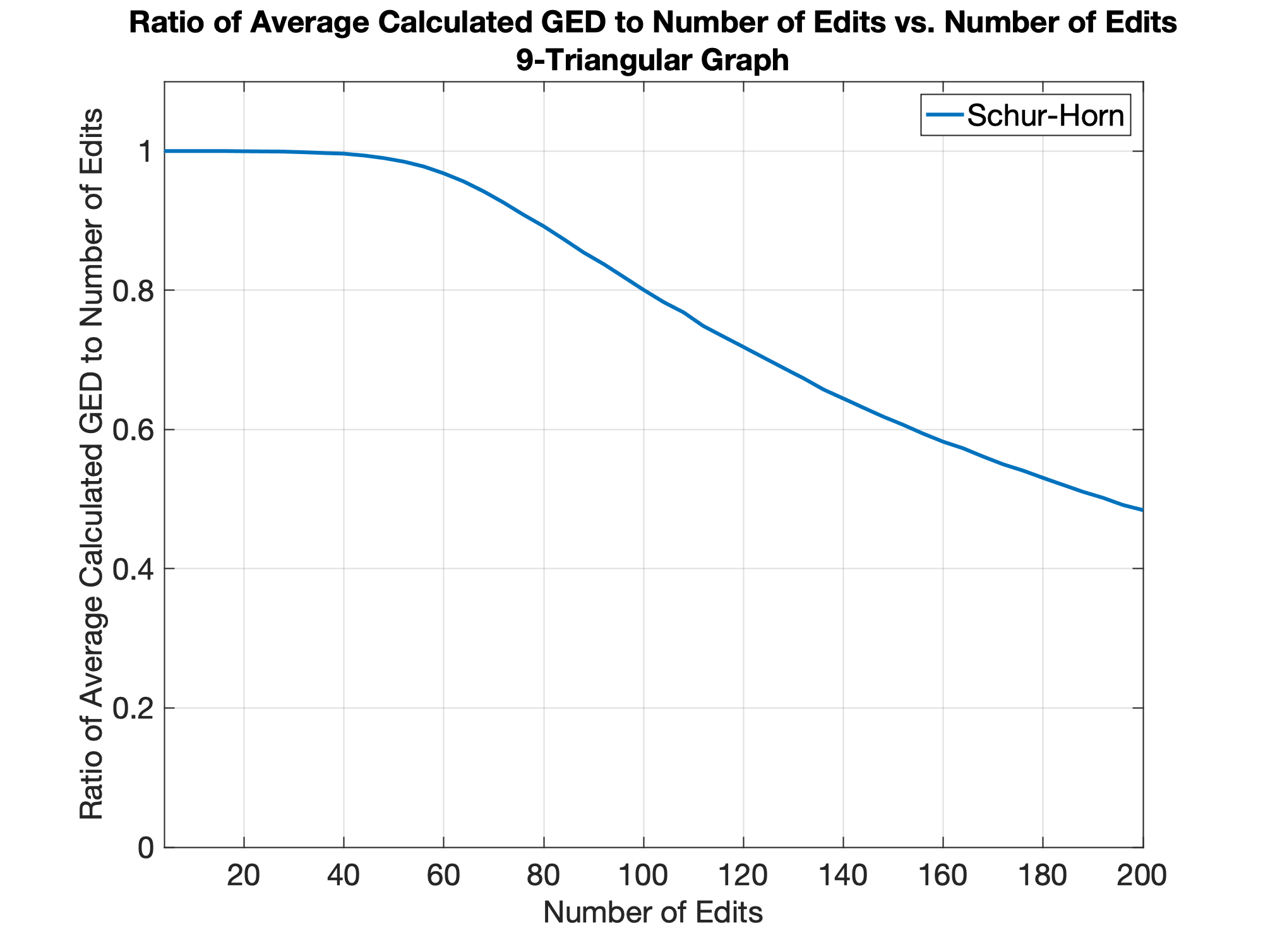}}
\subcaptionbox{\label{FigureGQ24SuccessProb}}{\includegraphics[width=0.45\textwidth]{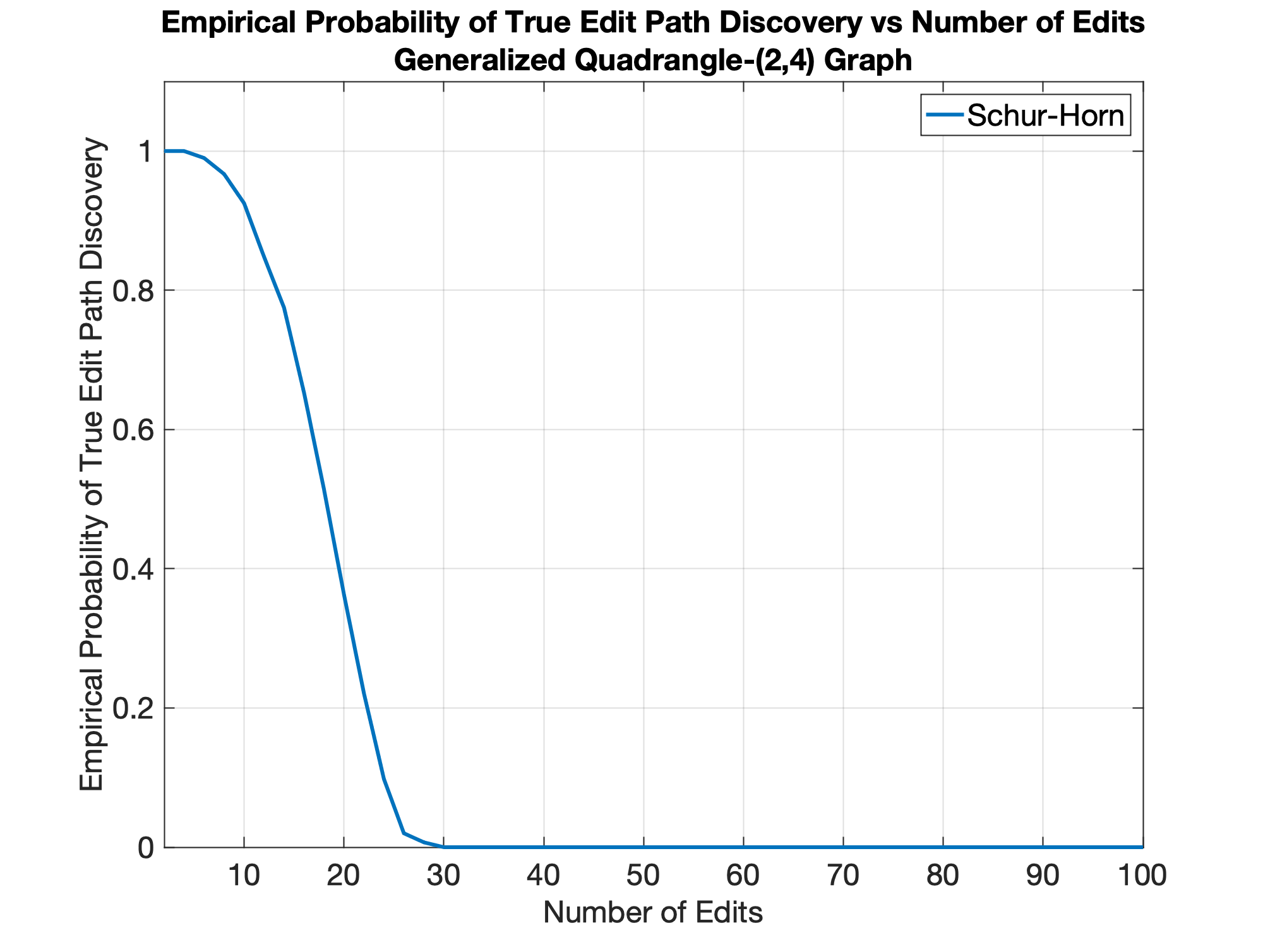}}
\subcaptionbox{\label{FigureGQ24AverageRatio}}{\includegraphics[width=0.45\textwidth]{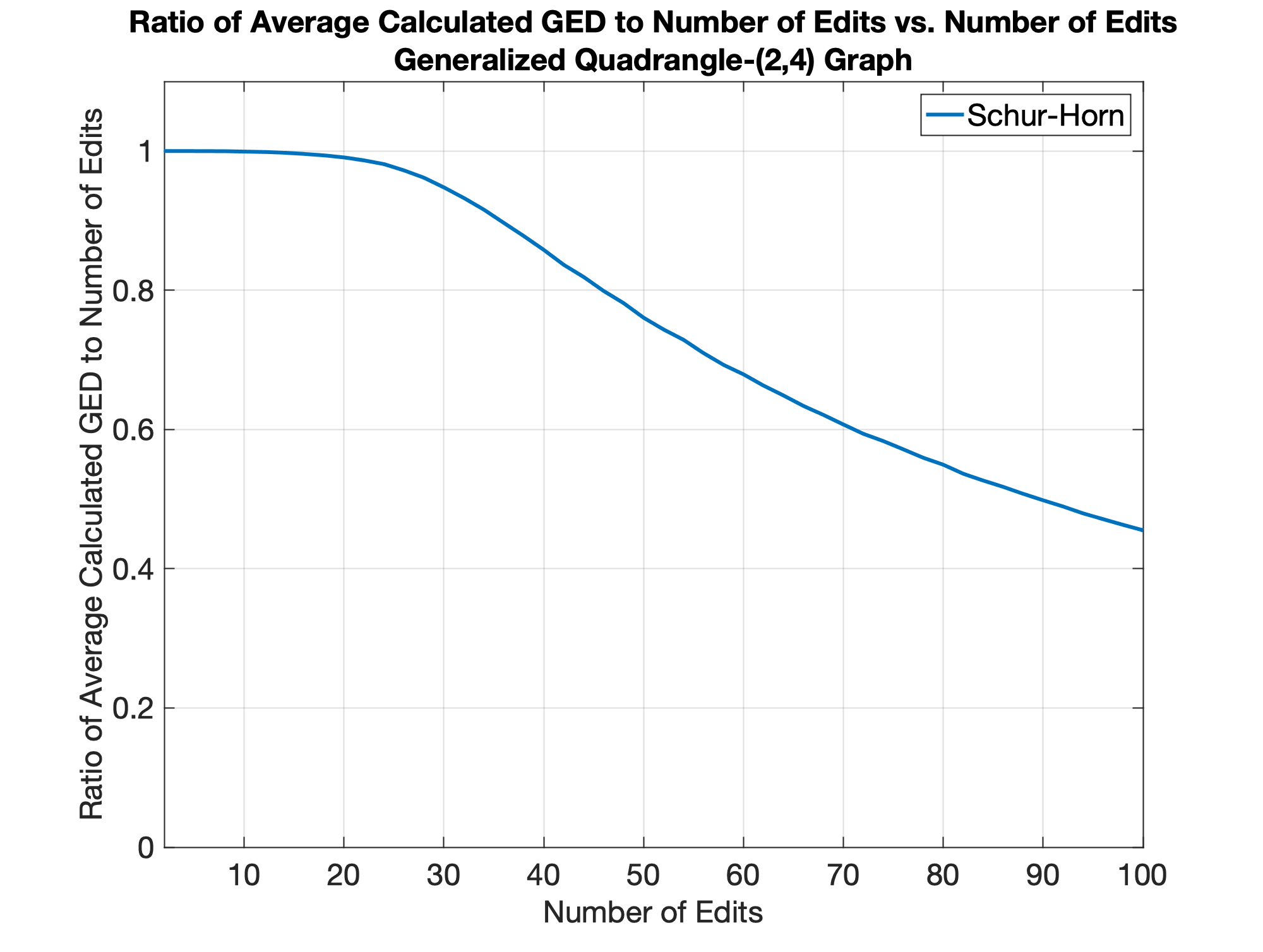}}
\caption{Performance of our framework (\ref{Optimization P}) with the Schur-Horn constraint. Left, empirical probability of discovering the true graph edit path. Right, ratio of average calculated graph edit distance to number of edit operations. First row corresponds to the $9$-triangular graph and the second row to the generalized quadrangle-(2,4) graph.}\label{FigureSH}
\end{figure}

\subsection{Numerical Experiments} \label{SectionSHExperiment}

We demonstrate the utility of the Schur-Horn orbitope as a constraint set in \eqref{Optimization P} in obtaining bounds on the graph edit distance between graphs $\G$ and $\G'$.  In our experiments, we fix $\G$ to be either the 9-triangular graph $T_9$ (Figure \ref{FigureT9}) or the generalized quadrangle-(2,4) graph $GQ(2,4)$ (Figure \ref{FigureGQ24}) introduced previously.  The graph $T_9$ consists of 36 vertices and 252 edges and the graph GQ(2,4) consists of 27 vertices and 135 edges.  Both of these are strongly regular graphs.  In each case, the corresponding graph $\G'$ is obtained by adding/deleting edges randomly (both addition and deletion occur with equal probability) to achieve a desired number of edits.  When $\G$ is $T_9$ we vary the number of edits from four to 200 in increments of four, and when $\G$ is $GQ(2,4)$ we vary the number of edits from two to 100 in increments of two.  For each number of edits, we consider 1000 random trials and we report the probability that $\ged(\G,\G') = \ged_{LB}(\G,\G'; \mathcal{C}_{\mathcal{SH}(\G)})$ and the ratio of the average computed lower bound $\ged_{LB}(\G,\G'; \mathcal{C}_{\mathcal{SH}(\G)})$ to the number of edits.  In particular, we declare that $\ged(\G,\G') = \ged_{LB}(\G,\G'; \mathcal{C}_{\mathcal{SH}(\G)})$ if the infinity norm (maximum entrywise magnitude) of the difference between the optimal solution $\hat{E}$ and the true edit matrix $E^*$ is less than $0.01$.  The results are shown in Figure \ref{FigureSH} and they were obtained using the CVX parser \cite{grant2008cvx,grant2008graph} and the SDPT3 solver \cite{toh1999sdpt3}.  As these plots demonstrate, the convex relaxation \eqref{Optimization P} with a Schur-Horn orbitope as an invariant convex constraint set is tight when the number of edits is small and leads to effective lower bounds when the number of edits is large.

\section{Proofs of Results from Section \ref{SectionSchurHorn}} \label{SHProofs}

\subsection{Constructing a Dual Certificate}
We describe here a method for constructing a suitable dual certificate satisfying the conditions of Lemma~\ref{LemmaOptimalityConditions}, and we prove that this construction is valid whenever certain conditions involving the parameters $\xi$ and $\rho$ from Section \ref{SectionSchurHorn} are satisfied.  Our proofs are presented in the context of two intermediary lemmas, which are then used to prove Theorem \ref{TheoremMainTheorem}.  Specifically, our approach to constructing $Q \in \mathbb{S}^n$ that satisfies the requirements of Lemma~\ref{LemmaOptimalityConditions} is to express $Q$ as follows:
\begin{align*}
Q = R + \Delta.
\end{align*}
Here $R \in \mathbb{S}^n$ plays the role of a `reference' matrix that depends purely on the underlying graph $\G$, while $\Delta \in \mathbb{S}^n$ is a perturbation that additionally depends on the specific edits that transform $\G$ to $\G'$.  We begin by stating an easily-proved result that serves as the basis for our subsequent development:

\begin{lemma} \label{LemmaOptimalityConditionsSufficient}
Let $\G$ be a graph on $n$ vertices with $m$ distinct eigenvalues, and let $\G'$ be a graph that is obtained from $\G$ via edge deletions/additions such that each vertex is incident to at most $d$ edits.  Let $A, A+E^* \in \mathbb{S}^n$ represent the graphs $\G$ and $\G'$, respectively; that is, $E^*$ consists of at most $d$ nonzeros per row/column.  Let $\Omega \subset \mathbb{S}^n$ denote the subspace consisting of all matrices with nonzeros contained within the support of $E^*$.  Let $P_i \in \mathbb{S}^n, ~ i=1,\dots,m$ denote projection maps onto the eigenspaces of $A$ indexed by decreasing order of the corresponding eigenvalues.  Suppose a vector $\alpha\in [0,1]^m$, a vector $\gamma \in \mathbb{R}^m$, and a matrix $\Delta \in \mathbb{S}^n$ satisfy the following conditions with $R = \sum_i \gamma_i P_i$:
\begin{enumerate}	
	\item $\pomega( \Delta ) + \pomega(R) = \sign(E^*)$, \label{LemmaOptimalityConditionsSufficientEnumPomega}
	\item $||\pomegac(\Delta)||_\infty + ||\pomegac(R)||_\infty < 1$	\label{LemmaOptimalityConditionsSufficientEnumNorm},
	\item $\projop_{ij} (\Delta) = 0 $, $\forall i,j\in\{1,\dots,m\},\, i\neq j$, \label{LemmaOptimalityConditionsSufficientPij}	
	\item $||\projop_{ii}(\Delta)||_2 + ||\projop_{i+1,i+1} (\Delta)||_2 < \gamma_{i+1} - \gamma_i$, $\forall i \in \{1,\dots,m-1\},$
	\label{LemmaOptimalityConditionsSufficientEnumEgap}
	\item $\xi(\alpha,d,\G) < 1. \label{LemmaOptimalityConditionsSufficientXiBound} $
\end{enumerate}
Then the convex relaxation \eqref{Optimization P} with the Schur-Horn orbitope constraint exactly computes the edit distance between $\G$ and $\G'$, i.e., $\ged(\G,\G') = \ged_{LB}(\G,\G'; \mathcal{C}_{\mathcal{SH}(\G)})$, with the optimal solution being unique and achieved at a matrix that specifies an optimal set of edits.
\end{lemma}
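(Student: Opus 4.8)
The plan is to exhibit a dual certificate of the form $Q = R + \Delta$, with $R = \sum_{i=1}^m \gamma_i P_i$, and to verify that it meets the four requirements of Lemma~\ref{LemmaOptimalityConditions}; the conclusion of the present lemma then follows immediately from that one. Three of those four requirements amount to bookkeeping. Since $\pomega$ is linear, hypothesis~\ref{LemmaOptimalityConditionsSufficientEnumPomega} gives $\pomega(Q) = \pomega(R) + \pomega(\Delta) = \sign(E^*)$, which is the first condition of Lemma~\ref{LemmaOptimalityConditions}. The triangle inequality together with hypothesis~\ref{LemmaOptimalityConditionsSufficientEnumNorm} gives $\norm{\pomegac(Q)}_\infty \le \norm{\pomegac(R)}_\infty + \norm{\pomegac(\Delta)}_\infty < 1$, which is condition~\ref{LemmaOptimalityConditionPomegaC}. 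Finally, hypothesis~\ref{LemmaOptimalityConditionsSufficientXiBound} is verbatim the condition $\xi(\alpha,d,\G) < 1$, which underlies the transversality/uniqueness portion of Lemma~\ref{LemmaOptimalityConditions}.

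The substance of the argument is the remaining requirement, namely $Q \in \text{relint}(\mathcal{N}_{\mathcal{SH}(\G)}(A))$, which I would verify through Lemma~\ref{LemmaNormalCone}. First, hypothesis~\ref{LemmaOptimalityConditionsSufficientPij} states that $\projop_{ij}(\Delta) = P_i \Delta P_j = 0$ for $i \neq j$, so $\Delta = \sum_{i=1}^m \projop_{ii}(\Delta)$ is block-diagonal with respect to the eigenspaces of $A$ and hence commutes with $A$. Since $R = \sum_i \gamma_i P_i$ is trivially block-diagonal as well, $Q = R + \Delta$ is simultaneously diagonalizable with $A$, which is the first clause of Lemma~\ref{LemmaNormalCone}. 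For the second clause, the restriction of $Q$ to the $i$-th eigenspace is $Q|_{\mathcal{E}_i} = \gamma_i\,I|_{\mathcal{E}_i} + \projop_{ii}(\Delta)|_{\mathcal{E}_i}$, so Weyl's inequality confines the spectrum of $Q|_{\mathcal{E}_i}$ to the interval $[\gamma_i - \norm{\projop_{ii}(\Delta)}_2,\ \gamma_i + \norm{\projop_{ii}(\Delta)}_2]$. Comparing the extreme eigenvalues on $\mathcal{E}_i$ and $\mathcal{E}_{i+1}$ and invoking hypothesis~\ref{LemmaOptimalityConditionsSufficientEnumEgap} then produces the strict separation of the spectra of $Q$ across consecutive eigenspaces required by Lemma~\ref{LemmaNormalCone}, placing $Q$ in the relative interior of the normal cone.

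With all four conditions of Lemma~\ref{LemmaOptimalityConditions} established, that lemma yields both exactness and uniqueness, completing the argument. The only step that rises above routine transcription is the spectral-separation estimate of the previous paragraph, and even there the main point requiring care is that the perturbation $\projop_{ii}(\Delta)$ acts \emph{within} a single eigenspace (a fact guaranteed by hypothesis~\ref{LemmaOptimalityConditionsSufficientPij}), so that its influence on the eigenvalues of $Q|_{\mathcal{E}_i}$ is controlled by its spectral norm via Weyl's inequality and does not couple across eigenspaces; the separation gap $\gamma_{i+1}-\gamma_i$ must then dominate the sum of the two adjacent perturbation norms, which is precisely what hypothesis~\ref{LemmaOptimalityConditionsSufficientEnumEgap} asserts. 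Everything else is a direct translation of the hypotheses into the language of Lemma~\ref{LemmaOptimalityConditions}.
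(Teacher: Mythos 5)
Your proposal is correct and is precisely the argument the paper leaves implicit: its entire proof of this lemma is the single line ``one can check that the conditions of Lemma~\ref{LemmaOptimalityConditions} are satisfied by setting $Q = R + \Delta$,'' and your verification --- linearity of $\pomega$ for the first condition, the triangle inequality for the second, block-diagonality of $\Delta$ plus the spectral-separation estimate (Weyl is even slightly more than needed, since $\projop_{ii}(\Delta)$ acts within $\mathcal{E}_i$ so the spectrum of $Q|_{\mathcal{E}_i}$ lies in $[\gamma_i - \norm{\projop_{ii}(\Delta)}_2,\, \gamma_i + \norm{\projop_{ii}(\Delta)}_2]$ directly) for the relative-interior condition, and hypothesis~\ref{LemmaOptimalityConditionsSufficientXiBound} passed through verbatim --- is exactly the intended check. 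One wrinkle worth recording, inherited from the paper's statements rather than introduced by you: since hypothesis~\ref{LemmaOptimalityConditionsSufficientEnumEgap} forces $\gamma_{i+1} > \gamma_i$ while the $P_i$ are indexed by \emph{decreasing} eigenvalue, the separation you actually derive is $\lambda_{\min}(Q|_{\mathcal{E}_{i+1}}) > \lambda_{\max}(Q|_{\mathcal{E}_i})$, i.e., the blocks of $Q$ \emph{increase} along the index, which is the reverse of clause 2 of Lemma~\ref{LemmaNormalCone} as literally written; the two statements are reconciled by reversing one of the two index orderings (a labeling inconsistency in the paper that your write-up silently reproduces when it asserts the separation is ``required by Lemma~\ref{LemmaNormalCone}''), and with that relabeling your argument is complete.
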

\begin{proof}
	One can check that the conditions of Lemma \ref{LemmaOptimalityConditions} are satisfied by setting $Q = R + \Delta$.
\end{proof}						

%

This lemma highlights the role of the parameter $\gamma$, in particular demonstrating that larger separation among the values of $\gamma$ makes condition \ref{LemmaOptimalityConditionsSufficientEnumEgap} easier to satisfy but may also increase the value of $||\pomegac(R)||_\infty$, thus making condition \ref{LemmaOptimalityConditionsSufficientEnumNorm} potentially harder to satisfy.  

We now move on to the perturbation term $\Delta$.  As this matrix must satisfy several of the constraints discussed in Lemma~\ref{LemmaOptimalityConditionsSufficient}, its construction is somewhat delicate.  We build on the ideas developed in \cite{chen2013low} in the context of low-rank matrix recovery, but with certain adaptations that are crucial to our setting.  We construct $\Delta$ as an element in the range of an operator $\operator:\mathbb{S}^{n} \rightarrow \mathbb{S}^{n}$ that is parametrized by $\alpha \in [0,1]^m$:
\begin{align*}
\operator := \left(\sum_{i=1}^{m} \alpha_{i}\projop_{ii}\right)\pomega\left[ I - \left(I - \sum_{i=1}^m \alpha_i \projop_{ii}  \right)\pomega \right]^{-1}.
\end{align*}
All of the operators here are as defined before.  A point of the departure in the description of this operator relative to the ideas in \cite{chen2013low} is that our version allows for `fractional' contractions (as well as integral ones) based on the choice of $\alpha$.  When it is well-defined (i.e., the term involving the inverse is indeed invertible), the operator $\operator$ possesses a number of properties that lead to a convenient approach for constructing a suitable dual variable:
\begin{itemize}
\item[(P1)] $\projop_{ij} \operator  = 0$ $\forall i,j \in\{1,\dots,m\}$, $i\neq j$
\item[(P2)] $\pomega \operator  = \pomega$.
\end{itemize}
In the context of Lemma~\ref{LemmaOptimalityConditionsSufficient}, property (P1) ensures that that $\Delta$ is completely contained in a desired subspace, as stipulated by condition \ref{LemmaOptimalityConditionsSufficientPij} of Lemma \ref{LemmaOptimalityConditionsSufficient}.  Further, property (P2) implies that condition \ref{LemmaOptimalityConditionsSufficientEnumPomega} of Lemma \ref{LemmaOptimalityConditionsSufficient} is satisfied -- in particular, we make use of this property to ensure that $\Delta$ takes on a desired value when restricted to $\Omega$. Conditions \ref{LemmaOptimalityConditionsSufficientEnumNorm} and \ref{LemmaOptimalityConditionsSufficientEnumEgap} of Lemma \ref{LemmaOptimalityConditionsSufficient} require that the quantities $\norm{\projop_{ii}(\Delta)}_2$ and $\norm{\pomegac(\Delta)}_\infty$ to be sufficiently small -- these conditions are satisfied by the operator $\operator$ as well, as documented next:

\begin{lemma} \label{LemmaNormBounds}
Consider the same setup as in Lemma \ref{LemmaOptimalityConditionsSufficient}.  Fix any $\alpha\in[0,1]^m$ such that $\xi(\alpha,d,\G) < 1$.  Then the operator $\operator :\mathbb{S}^n \rightarrow \mathbb{S}^n$ is well-defined (i.e., the term containing the inverse is indeed invertible) and the following inequalities hold:
\begin{enumerate}
\item  $\norm{[\pomegac \operator] (X)}_\infty \leq \frac{\xi(\alpha,d,\G) \norm{X}_\infty}{1-\xi(\alpha,d,\G)}$, \label{LemmaNormBoundsInfinity}
\item $\norm{[ \projop_{ii} \operator ] (X)}_2 \leq \frac{\alpha_i d \norm{X}_\infty }{1-\xi(\alpha,d,\G)}$.\label{LemmaNormBoundsTwo}	
\end{enumerate}
\end{lemma}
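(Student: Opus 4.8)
The plan is to recast $\operator$ in terms of the single contraction $T := I - \sum_{i=1}^m \alpha_i \projop_{ii}$, so that $\sum_{i=1}^m \alpha_i \projop_{ii} = I - T$ and the operator reads $\operator = (I-T)\,\pomega\,[\,I - T\pomega\,]^{-1}$. The crux of the argument, and the step I expect to be the main obstacle, is controlling the inverse $[\,I - T\pomega\,]^{-1}$ by a Neumann series. The key observation that unlocks everything is that $\pomega$ always outputs a matrix with at most $d$ nonzeros per row and per column (since $\Omega$ is supported on the support of $E^*$), so the \emph{restricted} bound defining $\xi(\alpha,d,\G)$ applies to $\pomega(Y)$ for every $Y$; this is what lets us use $\xi$ rather than the full induced norm $\norm{I-T}_{\infty\to\infty}$. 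Combining this with the fact that the coordinate projection $\pomega$ is nonexpansive in $\norm{\cdot}_\infty$, I would obtain $\norm{T\pomega(Y)}_\infty \le \xi(\alpha,d,\G)\,\norm{\pomega(Y)}_\infty \le \xi(\alpha,d,\G)\,\norm{Y}_\infty$. Hence $T\pomega$ has entrywise $\ell_\infty\to\ell_\infty$ induced norm at most $\xi(\alpha,d,\G) < 1$, so $[\,I - T\pomega\,]^{-1} = \sum_{k\ge 0}(T\pomega)^k$ converges (giving well-definedness) and $\norm{[\,I - T\pomega\,]^{-1}(X)}_\infty \le \tfrac{1}{1-\xi(\alpha,d,\G)}\norm{X}_\infty$.

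With this in hand, set $Y := [\,I - T\pomega\,]^{-1}(X)$, so that $\norm{Y}_\infty \le \tfrac{\norm{X}_\infty}{1-\xi(\alpha,d,\G)}$ and $\operator(X) = (I-T)\pomega(Y)$. For the first inequality I would use $\pomegac\pomega = 0$ to write $\pomegac\operator(X) = \pomegac(I-T)\pomega(Y) = -\,\pomegac T\pomega(Y)$, and then bound $\norm{\pomegac T\pomega(Y)}_\infty \le \norm{T\pomega(Y)}_\infty \le \xi(\alpha,d,\G)\,\norm{Y}_\infty$, where the first step holds because $\pomegac$ only zeros out entries. Substituting the bound on $\norm{Y}_\infty$ yields exactly $\tfrac{\xi(\alpha,d,\G)\,\norm{X}_\infty}{1-\xi(\alpha,d,\G)}$, which is claim \ref{LemmaNormBoundsInfinity}.

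For the second inequality, the main ingredient is the orthogonality relation $\projop_{ii}\projop_{jj} = \delta_{ij}\projop_{ii}$, which follows from $P_iP_j = \delta_{ij}P_i$ for the eigenspace projections. Since $I-T = \sum_{j=1}^m \alpha_j\projop_{jj}$, this collapses $\projop_{ii}(I-T) = \alpha_i\projop_{ii}$, so $\projop_{ii}\operator(X) = \alpha_i\projop_{ii}\pomega(Y) = \alpha_i\,P_i\,\pomega(Y)\,P_i$. Because $\pomega(Y)$ has at most $d$ nonzeros per row and column, each of magnitude at most $\norm{Y}_\infty$, both its operator $1$-norm and its operator $\infty$-norm are at most $d\,\norm{Y}_\infty$, so by the standard interpolation bound $\norm{\pomega(Y)}_2 \le \sqrt{\norm{\pomega(Y)}_{1\to 1}\,\norm{\pomega(Y)}_{\infty\to\infty}} \le d\,\norm{Y}_\infty$. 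Finally, conjugation by the contraction $P_i$ does not increase the spectral norm, so $\norm{\projop_{ii}\operator(X)}_2 \le \alpha_i\,\norm{\pomega(Y)}_2 \le \tfrac{\alpha_i\, d\,\norm{X}_\infty}{1-\xi(\alpha,d,\G)}$, establishing claim \ref{LemmaNormBoundsTwo} and completing the proof plan. Aside from the Neumann-series estimate, every step is a routine consequence of the idempotence and orthogonality of the projections $\pomega$, $\pomegac$, and $\projop_{ii}$, so I anticipate no difficulty beyond that initial bound.
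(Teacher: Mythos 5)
Your proof is correct and takes essentially the same route as the paper's: the same Neumann-series argument for $[I - \projop_T\pomega]^{-1}$ driven by $\xi(\alpha,d,\G)<1$ (with the key point that $\pomega$ outputs matrices with at most $d$ nonzeros per row/column, so the restricted bound defining $\xi$ applies), followed by dropping the projections $\pomegac$ and $\projop_{ii}$ and bounding the spectral norm of a $d$-sparse matrix by $d$ times its largest entry. Your additional details — the explicit sign in $\pomegac\operator = -\pomegac\projop_T\pomega(I-\projop_T\pomega)^{-1}$ and the Schur-test justification of $\norm{\pomega(Y)}_2 \leq d\,\norm{Y}_\infty$ — simply spell out steps the paper asserts without elaboration.
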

In addition to providing upper bounds that serve as a foundation for the proof of our main theorem, Lemma~\ref{LemmaNormBounds} conveys the significance of the parameter $\xi(\alpha,d,\G)$.  Specifically, a suitably small value of $\xi(\alpha,d,\G)$ guarantees that $\operator$ is well-defined, along with the conclusion that elements in the range of $\operator$ have small infinity norm (when restricted to $\Omega$) and small operator norm (restricted to eigenspaces of $\G$).  The lemma also suggests that the operator norm of the restriction of $\operator(X)$ to any eigenspace of $\G$ scales with the corresponding entry of $\alpha$. Consequently, one can adjust $\alpha$ and $\gamma$ to ensure that every inequality in Lemma \ref{LemmaOptimalityConditionsSufficient} condition \ref{LemmaOptimalityConditionsSufficientEnumEgap} is satisfied.  Identifying the best values of $\alpha$ and $\gamma$ to achieve this may be accomplished in special cases based on underlying structure in $\G$, as demonstrated by Corollary~\ref{CorollaryMain} and the many concrete consequences that are described in Section \ref{SHConseq}.  In particular, in the proof of Corollary \ref{CorollaryMain}, we choose $\gamma$ such that the separation between consecutive $\gamma_i$'s is proportional to the sum of consecutive $\alpha_i$'s and we demonstrate that this approach yields easily-computable bounds based on properties of the underlying graph $\G$ on the highest number $d$ of tolerable edits per vertex.






\subsection{Proofs} \label{Proofs}
\subsubsection{Proof of Lemma \ref{LemmaNormBounds}}
Our proof is analogous to that of \cite[Lemma 8]{chen2013low}.  In order to avoid notational clutter, we denote  $( I - \sum_{i=1}^m \alpha_i \projop_{ii})$ as $\projop_T$.  Then from the definition of $\xi(\alpha,d,\G)$, we have:
\begin{align}
 \norm{[\projop_T \pomega] (X)}_\infty \leq \xi(\alpha,d,\G) \norm{X}_\infty.\label{eqnConvergence}
\end{align}
Due to the assumption that $\xi(\alpha,d,\G)<1$, we have that the series $I + \projop_{T}\pomega + \projop_{T}\pomega\projop_{T}\pomega + \dots$ converges geometrically with rate $\frac{1}{1-\xi(\alpha,d,\G)}$ and equals $(I - \projop_T \pomega)^{-1}$.  As such, the operator $\operator$ is well-defined.


Next, we proceed to the upper bounds.  First we have that:
\begin{align}
\norm{[\pomegac \operator] (X)}_\infty =& \norm{ [ \pomegac \projop_{T} \pomega (I-\projop_{T}\pomega)^{-1}] (X)  }_\infty \notag \\
\leq& \norm{ [\projop_{T} \pomega(I-\projop_{T}\pomega)^{-1}] (X) }_\infty \notag \\
\leq& \xi(\alpha,d,\G)\norm{ (I-\projop_{T}\pomega)^{-1} (X) }_\infty \notag \\
\leq& \frac{ \xi(\alpha,d,\G)\norm{ X    }_\infty }{1-\xi(\alpha,d,\G)} \notag .
\end{align}
One can check that the first equality holds based on a term-by-term comparison.  The first inequality follows by dropping the projection $\pomegac$.  Bounding the resulting quantity from above using $\xi(\alpha,d,\G)$ yields the second inequality.  The last inequality follows from the geometric convergence of $(I-\projop_{T}\pomega)^{-1}$.

Next we bound the quantity involving the operator norm:
\begin{align*}
\norm{[\projop_{ii} \operator] (X)}_2 & = \norm{[\alpha_i \projop_{ii}  \pomega ( I -  \projop_T\pomega )^{-1}] (X) }_2\\
&\leq \alpha_i \norm{  [\pomega ( I -  \projop_T\pomega )^{-1}] (X) }_2\\
&\leq \alpha_i d \norm{[\pomega ( I -  \projop_T\pomega )^{-1}] (X)}_\infty\\
&\leq \alpha_i d \norm{ [( I -  \projop_T\pomega )^{-1}] (X)}_\infty\\
&\leq  \frac{\alpha_i d \norm{X}_\infty }{1-\xi(\alpha,d,\G)}.
\end{align*}
The first inequality holds by dropping the projection $\projop_{ii}$.  The second inequality holds due to the fact that the operator norm of a matrix with at most $d$ entries per row/column can be bounded above by $d$ times the maximum element in magnitude of the matrix.  The third inequality holds by dropping the operator $\pomega$.  The final inequality follows from geometric convergence, as before.

\subsubsection{Proof of Theorem~\ref{TheoremMainTheorem}}
We prove that under the assumptions of this theorem the sufficient conditions of Lemma \ref{LemmaOptimalityConditionsSufficient} are satisfied.  Set $R = \sum_{i=1}^{m} \gamma_i P_i $ where $P_i \in\mathbb{S}^n$ is the projection matrix corresponding to the $i$'th eigenspace of $\G$.  Denote the edits by a matrix $E^* \in \mathbb{S}^n$, and let the subspace of matrices with nonzero entries contained inside the support of $E^*$ be denoted $\Omega$.  Set $M = \sign(E^*)-\pomega(R)$ and note that $M \in \Omega$.  Condition \ref{LemmaOptimalityConditionsSufficientXiBound} of Lemma \ref{LemmaOptimalityConditionsSufficient} is satisfied based on assumption \ref{TheoremConditionInfinityNorm} of Theorem~\ref{TheoremMainTheorem}.  As a result, the operator $\operator$ is well-defined by Lemma~\ref{LemmaNormBounds}.  Set $\Delta = \operator(M)$. We prove that $Q= R + \Delta$ satisfies the requirements of Lemma \ref{LemmaOptimalityConditionsSufficient}.

Condition \ref{LemmaOptimalityConditionsSufficientEnumPomega} of Lemma \ref{LemmaOptimalityConditionsSufficient}: One can check that:
\begin{align*}
\pomega(\Delta) + \pomega(R) = \pomega(\operator(M)) + \pomega(R) = \pomega(M)+\pomega(R) = \sign(E^*).
\end{align*}
Here the second equality holds due to property (P1) of the operator $\operator$.

Condition \ref{LemmaOptimalityConditionsSufficientEnumNorm} of Lemma \ref{LemmaOptimalityConditionsSufficient}: We have that:
\begin{align*}
\norm{\pomegac (\Delta) }_\infty + \norm{\pomegac(R)}_\infty \leq \frac{\xi(\alpha,d,\G) \norm{M}_\infty}{1-\xi(\alpha,d,\G)} + \norm{R}_\infty \leq  \frac{\xi(\alpha,d,\G)  + \rho(\gamma,\G)}{1-\xi(\alpha,d,\G)} < 1.
\end{align*}
The first inequality employed assertion \ref{LemmaNormBoundsInfinity} of Lemma \ref{LemmaNormBounds}, the second inequality follows from the triangle inequality and the definition of $M$, and the last inequality holds by assumption \ref{TheoremConditionInfinityNorm} of the theorem.

Condition \ref{LemmaOptimalityConditionsSufficientPij} of Lemma \ref{LemmaOptimalityConditionsSufficient}:  Follows from property (P2) of operator $\operator$.

Condition \ref{LemmaOptimalityConditionsSufficientEnumEgap} of Lemma \ref{LemmaOptimalityConditionsSufficient}:  One can check that:
\begin{align*}
\norm{\projop_{ii}(\Delta)}_2 + \norm{\projop_{i+1,i+1} (\Delta)}_2 &= \norm{ [\projop_{ii} \operator](M)}_2 + \norm{[\projop_{i+1,i+1}\operator](M)}_2 \\ &\leq \frac{(\alpha_i + \alpha_{i+1})  (1+\rho(\gamma,\G))\,d }{1-\xi(\alpha,d,\G)} <\gamma_{i+1} - \gamma_i, ~ \forall\,i\,\in\,\{1,\dots,m-1\}.
\end{align*}
Here the first inequality follows from assertion \ref{LemmaNormBoundsTwo} of Lemma \ref{LemmaNormBounds} and the triangle inequality, and the second inequality follows from the assumption of the theorem.


%
%
%
%
%
%
%

\subsubsection{Proof of Corollary \ref{CorollaryMain}}

For this proof we require the notion of incoherence of a subspace, which measures how well the subspace is aligned with the standard basis vectors.  This notion appears prominently in results on sparse signal recovery via convex optimization \cite{donoho2003optimally}.
\begin{definition}
Let $S\subseteq \mathbb{R}^n$ be a subspace and let $P_S$ be the corresponding projection onto $S$. The \emph{incoherence} of $S$ is denoted $\mu(S)$ and is defined as
\begin{align*}
\mu(S) := \max_{i} \| P_S e_i \|_2.
\end{align*}
Here $e_i$ is the $i$'th standard basis vector.
\end{definition}
For any projection matrix $P_S$, one can check that the inequality $\norm{P_S}_\infty \leq \mu(S)^2$ is satisfied.
\begin{remark} \label{RemarkIncoherence}
For vertex-transitive graphs, the diagonal entries of a projection matrix associated to any eigenspace of the graph are identical.  As a result, the incoherence of an eigenspace $\mathcal{E}$ of a vertex-transitive graph on $n$ vertices is equal to
\begin{align*}
\mu(\mathcal{E}) = \sqrt{\frac{\dim(\mathcal{E})}{n}}.
\end{align*}
\end{remark}

We now proceed to the proof of the corollary.  Denote the eigenspaces of $\G$ by $\mathcal{E}_i$ for $i\in\{1,\dots,m\}$ ordered by decreasing eigenvalue order.  Remark \ref{RemarkIncoherence} implies that $\mu(\mathcal{E}_i) = \sqrt{\frac{\dim(\mathcal{E}_i)}{n}}$. Denote the second largest coherence of the eigenspaces of $\G$ by $\bar{\mu} = \sqrt{\frac{\kappa}{n}}$, and denote the index of the eigenspace with the highest incoherence by $\ell$.  Set $\alpha_\ell =1$ and the remaining entries of $\alpha$ to 0. Furthermore, choose $\gamma$ such that:
\begin{align*}
\gamma_{i+1}-\gamma_{i} = c_1 \frac{\alpha_{i} + \alpha_{i+1}}{{\bar{\mu}}^2} + \epsilon, ~~\forall\,i\,\in\,\{1,\dots,m-1\},~\text{for some $c_1>0$, $\epsilon>0$.} 
\end{align*}
Here $c_1$ and $\epsilon$ are positive constants that can be as small as desired.  To establish condition \ref{TheoremConditionEigenSep} of Theorem \ref{TheoremMainTheorem}, we prove that the inequality below holds for all $i \in \{1,\dots,m-1\}$:
\begin{align*}
\frac{(\alpha_{i}+\alpha_{i+1}) \,(1+ \rho(\gamma,\G))\,d}{1-\xi(\alpha,d,\G)} \leq \gamma_{i+1}-\gamma_{i}- \epsilon =  c_1 \frac{\alpha_{i} + \alpha_{i+1}}{{\bar{\mu}}^2}, ~ \forall\,i\,\in\,\{1,\dots,m-1\}.
\end{align*}
Clearly, if $\alpha_{i}+\alpha_{i+1}=0$ for some $i$, then the corresponding inequality is satisfied. On the other hand, all the remaining inequalities can be collapsed to a single one by dividing both sides of all such inequalities  by $\alpha_{i}+\alpha_{i+1}$:
\begin{align*}
\frac{(1+\rho(\gamma,\G))d}{1-\xi(\alpha,d,\G)} \leq c_1\frac{n}{\kappa},
\end{align*}
a sufficient condition for which is:
\begin{align}
\frac{(1+\rho(\gamma,\G)) c }{1-\xi(\alpha,d,\G)} \leq c_1. \label{EqThmEigensepSimplified}
\end{align}
In the remainder of the proof, we show that our particular choice of $\gamma$ and $\alpha$ satisfy inequality \eqref{EqThmEigensepSimplified} and Theorem \ref{TheoremMainTheorem} condition \ref{TheoremConditionInfinityNorm}.

We bound $\rho(\gamma,\G)$ from above via a change of variable.  Setting $\tilde{\gamma} = \gamma_1 \frac{\bar{\mu}^2}{c_1}$, we have that:
\begin{align}
\rho(\gamma,\G) = \norm{\sum_{i=1}^{m} \gamma_i P_i}_\infty &=  \norm{\sum_{i=1}^{m}   \Big[ \gamma_1  + \sum_{j=1}^{i-1}( \frac{c_1(\alpha_j+\alpha_{j+1})}{\bar{\mu}^2}  + \epsilon )    \Big]   P_i}_\infty \label{rhoineq1}   \\
&\leq  \norm{\sum_{i=1}^m \Big[\gamma_1 + \sum_{j=1}^{i-1} \frac{c_1(\alpha_j+\alpha_{j+1})}{\bar{\mu}^2} \Big]P_i}_\infty   + \sum_{i=1}^m  \sum_{j=1}^{i-1} \epsilon \norm{P_i  }_\infty   \label{rhoineq2}  \\
&\leq \frac{c_1}{{\bar{\mu}}^2} \norm{\sum_{i=1}^m \Big[\tilde{\gamma}+ \sum_{j=1}^{i-1} (\alpha_j + \alpha_{j+1})\Big]P_i}_\infty +\epsilon c_3 \label{rhoineq3}\\
&\leq  \frac{c_1}{{\bar{\mu}}^2} \sum_{i=1}^m \Big[ | \tilde{\gamma}+ \sum_{j=1}^{i-1} (\alpha_j + \alpha_{j+1})|\norm{P_i}_\infty\Big] +\epsilon c_3 \label{rhoineq4}\\
&\leq c_1 c_2 +\epsilon c_3.  \label{EqnMuBar}
\end{align}
Here \eqref{rhoineq2} follows by grouping all terms with $\epsilon$ and using the triangle inequality, \eqref{rhoineq3} follows by the change of variables described above and bounding all the terms in the right summand from above by one, and \eqref{rhoineq4} follows from the triangle inequality.  We choose the remaining degree of freedom $\tilde{\gamma}$ to eliminate the contribution of the subspace with the highest incoherence parameter in the left summand. Consequently, (\ref{EqnMuBar}) follows by bounding the infinity norms of the remaining projection matrices from above by $\bar{\mu}^2$. Crucially, the fact that $\alpha \in [0,1]^m$ and $m$ are viewed as fixed enables us to bound the sum from above with positive constants $c_2$ and $c_3$ that depend only on $m$.


Next, we use our particular choice for $\alpha$ to bound $\xi(\alpha,d,\G)$ from above. In particular, for any $W\in\mathbb{S}^{n}$ we have:
\begin{align}
&\norm{  \big[(I-\sum_{i=1}^m \alpha_i \projop_{ii}) \pomega\big] (W)  }_\infty \notag \\
&= \norm{  \sum_{\substack{i=1\\i\neq\ell}}^m \Big(P_i \pomega (W) + \pomega (W) P_i - P_i \pomega (W) P_i\Big) - 	\sum_{\substack{i=1\\i\neq\ell}}^m\sum_{\substack{j=1\\j\neq i,\ell}}^m P_i \pomega(W) P_j			 }_\infty \notag \allowdisplaybreaks\\
&\leq \sum_{\substack{i=1\\i\neq\ell}}^m \Big(\norm{P_i \pomega (W)}_\infty +	\norm{\pomega (W) P_i }_\infty	+ \norm{P_i \pomega (W) P_i}_\infty\Big) + \sum_{\substack{i=1\\i\neq\ell}}^m\sum_{\substack{j=1\\j\neq i,\ell}}^m \norm{ P_i \pomega(W) P_j	}_\infty \notag\allowdisplaybreaks\\
&\leq \left[ \sum_{\substack{i=1\\i\neq\ell}}^m 2 \mu(\mathcal{E}_i)\sqrt{d} +  \left(\sum_{\substack{i=1\\i\neq\ell}}^m{\mu(\mathcal{E}_i)}^2  +  \sum_{\substack{i=1\\i\neq\ell}}^m\sum_{\substack{j=1\\j\neq i,\ell}}^m \mu(\mathcal{E}_i)\mu(\mathcal{E}_j)\right) d \right] \norm{W}_\infty \notag \allowdisplaybreaks \\
&\leq ( c_4 \bar{\mu} \sqrt{d} + c_5 \bar{\mu}^2 d ) \norm{W}_\infty \allowdisplaybreaks \notag\\
&= (c_4 \sqrt{\frac{\kappa d}{n}} + c_5 \frac{\kappa d}{n})\norm{W}_\infty\allowdisplaybreaks \notag \\
&\leq (c_4\sqrt{c} + c_5 c) \norm{W}_\infty, \label{EqnXiBound}
\end{align}
for some positive real numbers $c_4$ and $c_5$ depending only on $m$. Here the first equality is obtained by rearranging the sum in terms of the projection matrices $P_i$, the first inequality is due to the triangle inequality, and the second inequality is a consequence of the following inequalities:
\begin{align*}
\norm{P_i\pomega(W)}_\infty &\leq \mu(\mathcal{E}_i) \sqrt{d} \norm{W}_\infty, \\
\norm{\pomega(W)P_i}_\infty &\leq \mu(\mathcal{E}_i) \sqrt{d} \norm{W}_\infty, \\
\norm{P_i\pomega(W) P_j}_\infty &\leq \mu(\mathcal{E}_i) \mu(\mathcal{E}_j) {d} \norm{W}_\infty;
\end{align*}
which hold for all $i,j \in \{1,\dots,m\}$.

Equations (\ref{EqnMuBar}) and (\ref{EqnXiBound}) assert that $\rho(\gamma,\G)$ and $\xi(\alpha,d,\G)$ can lowered as desired by reducing the constants $c_1,\epsilon$ and $c$. Consequently, one can check that both condition \ref{TheoremConditionInfinityNorm} of Theorem \ref{TheoremMainTheorem} and equation \eqref{EqThmEigensepSimplified} (which implies condition \ref{TheoremConditionEigenSep} of Theorem \ref{TheoremMainTheorem}) can be satisfied by first choosing a sufficiently small $c_1$ and $\epsilon$ (both depending on $m$) to bound $\rho(\gamma,\G)$ from above, and then suitably choosing a sufficiently small $c$ depending on $m$, $c_1$ and $\epsilon$.

\section{Numerical Illustrations with Invariants based on Stable Sets and Cuts} \label{SectionExploratory}


In this section we evaluate the utility of two invariant convex sets based on (tractable relaxations of) the inverse of the stability number and the maximum cut value, both of which are described in Section~\ref{SectionConvexInvariantSets}.  Our investigation is via numerical experiments rather than theoretical bounds as in Section~\ref{SectionSchurHorn}.  The primary reason for this choice is that we do not have a detailed understanding of the face structure of the invariant convex sets considered in this section; in contrast, we have a precise (and convenient for the purposes of analysis) characterization of the geometry of the Schur-Horn orbitope, which played a crucial role in the theoretical results of the previous section.  Nonetheless, we pursue a systematic approach in the present section by identifying classes of graphs that are `brittle' in the sense that deleting / adding a small number of edges results in large changes in their stability number / maximum cut value.  Such graph families present excellent examples for which invariant convex sets based on the inverse of stability number and the maximum cut value are particularly well-suited to obtaining useful bounds on the graph edit distance.  More broadly, our discussion in this section highlights the larger point that our framework \eqref{Optimization P} can be tailored to the particular structural properties of the underlying graphs to yield useful lower bounds on the edit distance.

	
\subsection{Constraining the Inverse of the Stability Number} \label{SectionStabilityNumber}

The function $f(A)$ described in Section \ref{SectionConvexInvariantSets} is an efficiently computable lower bound on the inverse of the stability number of a graph, and further it is a concave graph invariant.  Consequently, super-level sets of this function provide tractable invariant convex sets that may be employed in our framework \eqref{Optimization P}.  Given a graph $\G$, we denote the associated set by $\mathcal{C}_{\mathcal{IS}(\G)}$:
\begin{align} \label{EqnCIS}
\mathcal{C}_{\mathcal{IS}(\G)} := \{M \in\mathbb{S}^n ~|~   f(M)\geq f(A) \} = \{ M \in\mathbb{S}^n ~|~ \exists \mu\in\mathbb{S}^n,\, \mu \geq 0 ,~I+M-\mu- f(A) 11^T \succeq 0    \}.
\end{align}
Here $A$ is any adjacency matrix representing $\G$.  From this description, it is immediately clear that for any edit to $\G$ that corresponds to an increase in the value of the function $f$, the constraint $\mathcal{C}_{\mathcal{IS}(\G)}$ is inactive.  Adding edges to a graph can only reduce the stability number, and hence can potentially only increase the inverse of the stability number.  Although the function $f$ is only a lower bound on the inverse of the stability number, it satisfies a similar monotonocity property in that the value of $f$ is non-decreasing with the addition of edges to a graph.  The following lemma formalizes matters by describing the tangent cone at an adjacency matrix of a graph $\G$ with respect to the set $\mathcal{C}_{\mathcal{IS}(\G)}$:
\begin{lemma}\label{LemmaInverseStabilityTC}
For any graph $\G$ on $n$ vertices and associated adjacency matrix $A \in \mathbb{S}^n$, let $\alpha^* = f(A)$, i.e., the value corresponding to the Motzkin-Straus relaxation of the inverse of the stability number.  Then we have that:
\begin{align*}
\mathcal{T}_{\mathcal{C}_{\mathcal{IS}(\G)}}(A) = \{ T \in\mathbb{S}^n \,|\,  \exists \, \mu,\,\Lambda\,\in\,\mathbb{S}^n,\, \mu \geq 0 ,\, \Lambda \succeq 0,~ T + I + A - \alpha^* 11^T = \mu + \Lambda \}.
\end{align*}
\end{lemma}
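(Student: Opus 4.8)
The plan is to prove the lemma by converting the super-level-set definition of $\mathcal{C}_{\mathcal{IS}(\G)}$ into the conic-membership form already recorded in \eqref{EqnCIS}, and then reading off the tangent cone from that form. Write $\mathcal{D}^* := \{\mu + \Lambda : \mu,\Lambda \in \mathbb{S}^n,\ \mu \geq 0,\ \Lambda \succeq 0\}$ for the (closed, convex) cone dual to the doubly-nonnegative cone. The identity \eqref{EqnCIS} says precisely that $\mathcal{C}_{\mathcal{IS}(\G)} = \mathcal{D}^* + (\alpha^*\bones\bones^T - I)$, i.e. $\mathcal{C}_{\mathcal{IS}(\G)}$ is a translate of $\mathcal{D}^*$ by the fixed matrix $\alpha^*\bones\bones^T - I$. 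Setting $Z_0 := I + A - \alpha^*\bones\bones^T$, the membership $A \in \mathcal{C}_{\mathcal{IS}(\G)}$ is exactly $Z_0 \in \mathcal{D}^*$, and the set on the right-hand side of the lemma is $\{T : T + Z_0 \in \mathcal{D}^*\} = \mathcal{D}^* - Z_0$, the set of directions $T$ for which the unit step $A + T$ remains feasible.

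First I would justify \eqref{EqnCIS} rigorously, since the whole computation rests on it. The relaxation \eqref{EqnInvStRelaxation} defining $f$ is a primal SDP with compact feasible region, and it is strictly feasible: for instance $X = \tfrac{\delta}{n}I + \tfrac{1-\delta}{n^2}\bones\bones^T$ with small $\delta > 0$ is positive definite, entrywise positive, and satisfies $\bones'X\bones = 1$. Hence Slater's condition holds and strong duality yields $f(M) = \max\{t : \mu \geq 0,\ I + M - t\bones\bones^T - \mu \succeq 0\}$ for every $M$. From this, $f(M) \geq f(A) = \alpha^*$ holds if and only if there is a dual-feasible pair already at level $t = \alpha^*$ (raising $t$ beyond $\alpha^*$ only subtracts the PSD matrix $(t-\alpha^*)\bones\bones^T$, so feasibility at a larger $t$ implies feasibility at $\alpha^*$), which is exactly $I + M - \alpha^*\bones\bones^T \in \mathcal{D}^*$. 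This establishes \eqref{EqnCIS} together with the translated-cone description above.

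It then remains to identify $\mathcal{T}_{\mathcal{C}_{\mathcal{IS}(\G)}}(A)$ with $\mathcal{D}^* - Z_0$. The inclusion $\mathcal{D}^* - Z_0 \subseteq \mathcal{T}_{\mathcal{C}_{\mathcal{IS}(\G)}}(A)$ is immediate: if $T + Z_0 \in \mathcal{D}^*$, then by convexity of $\mathcal{D}^*$ we have $Z_0 + \tau T = (1-\tau)Z_0 + \tau(T + Z_0) \in \mathcal{D}^*$ for all $\tau \in [0,1]$, so $A + \tau T \in \mathcal{C}_{\mathcal{IS}(\G)}$ and $T$ is a feasible direction. The reverse inclusion is the crux of the argument and the step I expect to require the most care: here I would invoke the constraint qualification supplied by Slater's condition for the conic representation $M \mapsto I + M - \alpha^*\bones\bones^T \in \mathcal{D}^*$, so that the tangent cone agrees with the cone determined by $\mathcal{D}^*$ at $Z_0$, and then transport that description back through the identity-linear affine map. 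Concretely, this reduces to describing the admissible perturbations of $Z_0$ inside $\mathcal{D}^*$ in terms of an optimal decomposition $Z_0 = \mu^* + \Lambda^*$ and the complementary-slackness relations $\langle X^*, \Lambda^*\rangle = 0$ and $\langle X^*, \mu^*\rangle = 0$ satisfied by the optimal primal matrix $X^*$ of \eqref{EqnInvStRelaxation}; these relations pin down which directions keep $Z_0$ in $\mathcal{D}^*$ and should yield the stated form. The delicate point I would watch most closely is precisely this reverse inclusion, because $\mathcal{D}^* - Z_0$ is only a translate (not a priori the full closed conic hull) of the feasible-direction set, so the argument must verify that no additional tangent directions arise once the conic constraint qualification and the facial structure of $\mathcal{D}^*$ at $Z_0$ are taken into account.
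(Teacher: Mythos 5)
Your setup is sound, and the first two-thirds of your plan are carried out correctly: the Slater-point verification of strong duality (together with the monotonicity observation that dual feasibility at $t > \alpha^*$ implies feasibility at $t = \alpha^*$) does rigorously establish \eqref{EqnCIS}, i.e.\ $\mathcal{C}_{\mathcal{IS}(\G)} = \mathcal{D}^* + (\alpha^*\bones\bones^T - I)$ where $\mathcal{D}^* := \{\mu + \Lambda \,:\, \mu \geq 0,\ \Lambda \succeq 0\}$, and your forward inclusion $\mathcal{D}^* - Z_0 \subseteq \mathcal{T}_{\mathcal{C}_{\mathcal{IS}(\G)}}(A)$ with $Z_0 := I + A - \alpha^*\bones\bones^T$ is complete. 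This is already more detail than the paper supplies, since its proof of Lemma \ref{LemmaInverseStabilityTC} is a one-sentence appeal to convex duality.

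However, the reverse inclusion --- which you correctly single out as the crux and leave at the level of ``should yield the stated form'' --- is a genuine gap, and in fact no constraint-qualification or complementary-slackness argument can close it, because the claimed equality fails for every non-complete graph. For a closed convex cone $K$ and a point $z \in K$, the tangent cone to $K$ at $z$ is $\mathrm{cl}(K - \mathbb{R}_+ z)$, not the translate $K - z$, and the extra ray is not redundant here. Concretely, take $T = -2Z_0$. Since $Z_0 + \tau T = (1-2\tau)Z_0 \in \mathcal{D}^*$ for all $\tau \in [0,\tfrac12]$, the direction $T$ is feasible at $A$, hence lies in $\mathcal{T}_{\mathcal{C}_{\mathcal{IS}(\G)}}(A)$. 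On the other hand, $T + Z_0 = -Z_0$ has trace $-n(1-\alpha^*)$, which is strictly negative whenever $\G$ is not complete (the diagonal of $Z_0$ equals $(1-\alpha^*)\bones$ and $\alpha^* < 1$), whereas every matrix $\mu + \Lambda$ with $\mu \geq 0$ and $\Lambda \succeq 0$ has nonnegative trace. So $T$ does not belong to the right-hand side of the lemma; indeed that set is not even a cone (it contains $-Z_0$ but not $-2Z_0$). For instance, for the empty graph on two vertices one has $\alpha^* = \tfrac12$, $Z_0 = I - \tfrac12\bones\bones^T \succeq 0$, and one can verify directly that $f(A - 2\tau Z_0) = \tfrac12 = \alpha^*$ for $\tau \in [0,\tfrac12]$, exhibiting the failure. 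The statement you can actually prove --- and the only consequence the paper uses downstream, namely that entrywise nonnegative perturbations are tangent directions --- is exactly the inclusion you established; the equality becomes correct if the right-hand side is enlarged to $\mathrm{cl}\{T \in \mathbb{S}^n \,:\, \exists\, \lambda \geq 0 \text{ with } T + \lambda\,(I + A - \alpha^*\bones\bones^T) \in \mathcal{D}^*\}$.
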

\begin{proof}
The proof of this lemma follows from a direct application of convex duality.
\end{proof}

The description of the tangent cone in Lemma \ref{LemmaInverseStabilityTC} is based on the dual of the cone of doubly nonnegative matrices; see \cite{berman1994nonnegative} for more details on this connection.  In particular, this lemma implies that entrywise nonnegative matrices belong to the tangent cone at an adjacency matrix $A$ representing a graph $\G$ with respect to the set $\mathcal{C}_{\mathcal{IS}(\G)}$; consequently, edits to $\G$ consisting purely of addition of edges are feasible directions with respect to the set $\mathcal{C}_{\mathcal{IS}(\G)}$ and for such edits this set does not provide useful lower bounds on the edit distance.  Thus, we investigate the utility of the constraint $\mathcal{C}_{\mathcal{IS}(\G)}$ in settings in which the edits consist mainly of edge deletions.  Such problems arise in the context of \emph{graph completion} in which the objective is to add edges to a given graph so that the resulting graph satisfies some desired property.

Building on this discussion, the constraint set $\mathcal{C}_{\mathcal{IS}(\G)}$ is most likely to be useful for graphs $\G$ in settings in which the deletion of even a small number of edges of $\G$ results in an increase in the stability number.  Graphs that have a large number of stable sets with cardinality equal to the stability number offer a natural prospect for further exploration.  Fortunately, such graphs have been studied in extremal graph theory literature, from which we quote the following result \cite{jou2000number}:
\begin{theorem} \label{TheoremExtremal} \cite{jou2000number}
For $s, n \in \mathbb{N}$ with $n \geq 6$, let
\begin{align*}
h(n) = \begin{cases}
2\times3^{s-1} + 2^{s-1},~\text{if}~n=3s,\\
3^{s} + 2^{s-1},~\text{if}~n=3s+1,\\
4 \times 3^{s-1} + 3\times 2^{s-2},~\text{if}~n=3s+2.
\end{cases}
\end{align*}
Let $\G$ be any connected graph on $n$ vertices, and denote the cardinality of the set of all maximum independent sets of $\G$ by $\phi(\G)$.  Then $\phi(\G) \leq h(n)$ with equality if and only if $\G$ is isomorphic to one of the graphs shown in Figure \ref{FigureExtremalGraph}.
\end{theorem}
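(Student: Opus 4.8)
The statement is a purely combinatorial extremal result, so the plan is to argue by strong induction on $n$, establishing the upper bound $\phi(\G) \le h(n)$ and the characterization of the equality cases in tandem. Throughout, write $\alpha = \alpha(\G)$ for the independence number and partition the maximum independent sets of $\G$ according to a single, carefully chosen vertex $v$, into those that contain $v$ and those that avoid it. A maximum independent set that contains $v$ restricts to an independent set of size $\alpha-1$ in $\G - N[v]$, while one that avoids $v$ is a maximum independent set of $\G - v$ of size exactly $\alpha$ (removing a vertex cannot raise $\alpha$, and the surviving set witnesses $\alpha(\G-v)=\alpha$). Since any independent set of $\G - N[v]$ together with $v$ is independent in $\G$, one has $\alpha(\G - N[v]) \le \alpha-1$ always, and these observations yield the basic inequality
\begin{align*}
\phi(\G) \;\le\; \phi\big(\G - N[v]\big) + \phi\big(\G - v\big),
\end{align*}
with equality in the first summand exactly when $\alpha(\G - N[v]) = \alpha - 1$. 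The first source of case analysis is the degenerate situation where this independence number drops further, so that no maximum independent set passes through $v$.

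The second step is to reduce to a clean structural setting before iterating the recursion. First I would dispose of vertices of degree at most one: a leaf $w$ with neighbor $u$ forces every maximum independent set omitting $w$ to contain $u$ (by maximality), which lets one peel off the pair $\{u,w\}$ and relate $\phi(\G)$ to $\phi$ of a strictly smaller graph. With leaves removed, I would examine the block--cut-vertex decomposition of $\G$ and choose $v$ inside an end-block, so that deleting $v$ or $N[v]$ excises a controlled pendant piece. The heuristic guiding the whole argument is that, among connected graphs, the most maximum independent sets are produced by a \emph{star of cliques}: a central clique together with pendant cliques attached to one of its vertices by single edges. Because $k^{1/k}$ is maximized at $k=3$, triangles are the most efficient building blocks, and the three residues $n \equiv 0,1,2 \pmod 3$ correspond precisely to how the leftover vertices are packaged --- into all triangles, into one extra $K_4$, or into two $K_4$'s --- which are the configurations depicted in Figure~\ref{FigureExtremalGraph}. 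A direct count confirms that such stars of cliques on $3s$, $3s+1$, $3s+2$ vertices have exactly $2\cdot 3^{s-1}+2^{s-1}$, $3^s + 2^{s-1}$, and $4\cdot 3^{s-1}+3\cdot 2^{s-2}$ maximum independent sets, supplying the matching lower bound and pinning down the target recursion for $h$.

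For the inductive step proper, I would substitute the inductive hypothesis into the basic inequality and verify that the resulting arithmetic never exceeds $h(n)$; the end-block choice of $v$ guarantees that $\G - v$ and $\G - N[v]$ have the right sizes for the bound to close. The delicate part is the \emph{equality} analysis: to show $\phi(\G) = h(n)$ forces $\G$ to be extremal, I would trace back through each inequality and record when it is tight. Tightness in the split forces $v$ to have a prescribed local structure and its deletion to drop $\alpha$ in a controlled way, while tightness of the inductive bounds forces $\G - v$ and $\G - N[v]$ to be extremal themselves; assembling these constraints recursively reconstitutes the star-of-cliques shape listed in the figure.

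The main obstacle is not any single estimate but the volume of interacting cases. These are generated by (i) the deletion-sensitive, non-monotone behavior of $\alpha$ under removal of $v$ or $N[v]$, which controls whether the first summand is an equality or a strict overcount; (ii) the fact that $\G - v$ and $\G - N[v]$ may be disconnected, so the connected induction hypothesis does not apply directly and one needs an auxiliary bound for disconnected graphs (using that $\phi$ is multiplicative and $\alpha$ additive over components); and (iii) the requirement, in the equality characterization, to rule out all near-extremal competitors --- in particular to show that replacing any triangle by a non-clique block, or joining blocks by anything other than single pendant edges, strictly decreases the count. Managing this case analysis cleanly, presumably by first isolating auxiliary monotonicity lemmas describing how $\phi$ changes under adding an edge or collapsing a pendant clique, is where essentially all of the work lies.
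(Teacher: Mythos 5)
First, a point of comparison: the paper does not prove this statement at all --- Theorem \ref{TheoremExtremal} is imported wholesale from \cite{jou2000number} and used as a black box to motivate the choice of test graphs in Section \ref{SectionStabilityNumber}. So there is no internal proof to measure you against; your proposal has to be judged against what a complete proof of the Jou--Chang result requires. Your skeleton does match the strategy of the cited literature: the vertex split $\phi(\G) \le \phi(\G - N[v]) + \phi(\G - v)$ (which is valid as stated, including the degenerate cases where $\alpha$ drops), leaf peeling, multiplicativity of $\phi$ and additivity of $\alpha$ over components, and the direct count on stars of cliques --- your three equality-case counts $2\cdot 3^{s-1} + 2^{s-1}$, $3^s + 2^{s-1}$, and $4\cdot 3^{s-1} + 3\cdot 2^{s-2}$ are all correct for the graphs in Figure \ref{FigureExtremalGraph}.

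The genuine gap is that the inductive step, as you describe it, does not close, and the place where it fails is precisely the part you defer. The graphs $\G - v$ and $\G - N[v]$ are in general disconnected, so the bound you may apply to them is not $h(\cdot)$ but the strictly larger general-graph maximum, and that maximum is too big for the arithmetic. Concretely, take $n = 3s$, so $h(n) = 2\cdot 3^{s-1} + 2^{s-1} \approx \tfrac{2}{3}\, 3^s$. A disconnected graph on $n - 1 = 3(s-1)+2$ vertices (e.g., $s-1$ disjoint triangles plus a $K_2$) attains $2 \cdot 3^{s-1}$ maximum independent sets, which alone nearly exhausts $h(n)$; adding any nontrivial bound on $\phi(\G - N[v])$, say $3^{s-1}$ when $\deg(v) = 2$, yields $2\cdot 3^{s-1} + 3^{s-1} = 3^s > h(n)$. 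So a generic choice of $v$, combined with the component-wise auxiliary bound you propose, overshoots the target; the entire content of the Jou--Chang proof is the structural case analysis that shows a vertex $v$ can always be chosen (after leaf removal, inside an end-block, with controlled degree) so that either the pieces remain connected or the slack from connectivity of $\G$ recoups the gap --- and your proposal asserts this (``the end-block choice of $v$ guarantees \dots the bound to close'') without demonstrating it. The same deferral affects the equality characterization: tracing tightness back through the recursion only works once the inequality chain is actually established, and ruling out near-extremal competitors (non-clique blocks, multi-edge attachments) is additional work your sketch names but does not perform. As it stands this is a correct high-level plan with correct verification of the extremal counts, not a proof.
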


\begin{figure}[hbt]
\centering
\subcaptionbox{\label{Extremal1}}{\includegraphics[width=30mm,height=41mm]{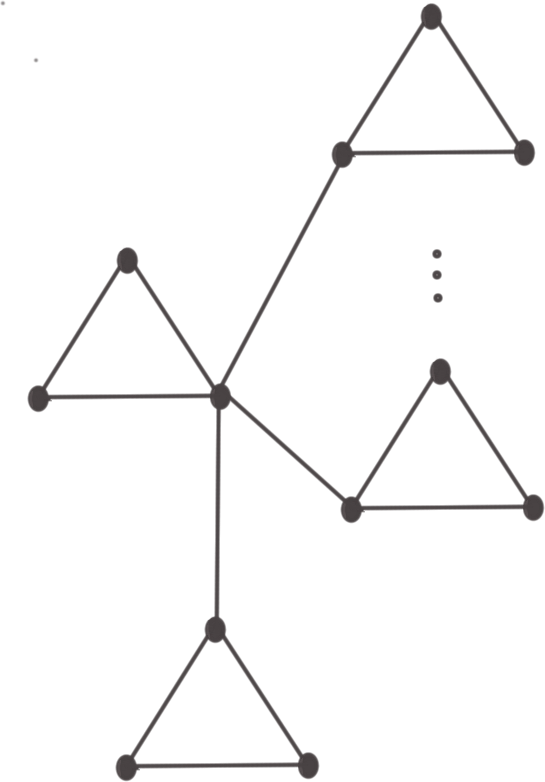}}\hspace{0.3in}%
\subcaptionbox{\label{Extremal2}}{\includegraphics[width=30mm,height=41mm]{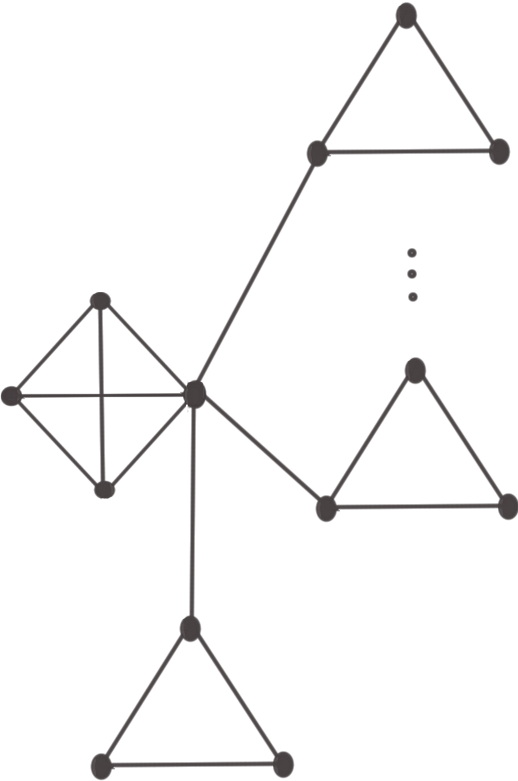}}\hspace{0.3in}%
\subcaptionbox{\label{Extremal3}}{\includegraphics[width=30mm,height=41mm]{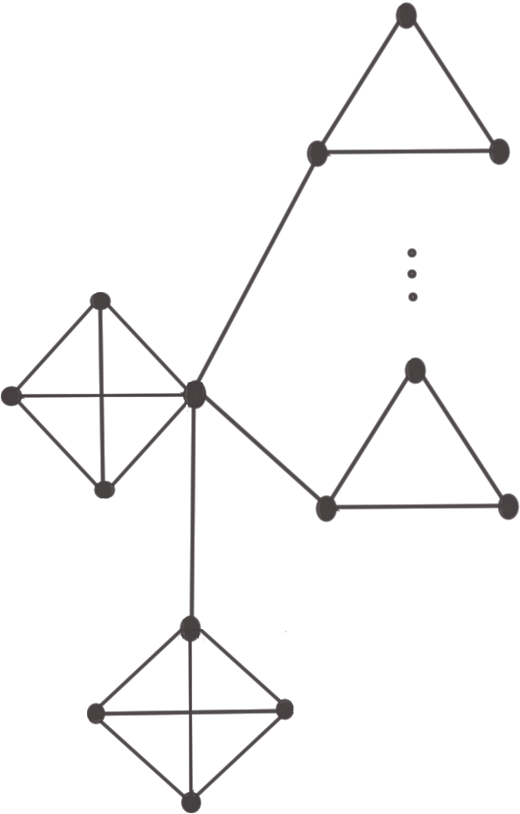}}
\caption{Left to right, $E(n)$ for $n = 3s$, $n = 3s+1$, $n = 3s+2$. For $n = 3s+r$, these graphs are formed by connecting $(s-r)\,K_3$'s and $r\, K_4$'s through edges connecting to a specific vertex.} \label{FigureExtremalGraph}
\end{figure}

This theorem states that the graphs $E(n)$ shown in Figure \ref{FigureExtremalGraph} are precisely the connected graphs that have the largest number of distinct maximum independent sets.  As such, they present a natural test case to investigate the utility of the constraint set $\mathcal{C}_{\mathcal{IS}(\G)}$ in providing bounds on the graph edit distance, at least in settings in which the edits are composed predominantly of edge deletions.  We illustrate here the results of numerical experiments conducted on the graph $E(30)$, which is a sparse graph with $39$ edges and $396$ nonedges.  The setup of this experiment is the same as that described in Section \ref{SectionSHExperiment} with one notable exception: in the present experiment, we assume asymmetric edits rather than symmetric edits so that 80\% of the edits are edge deletions while 20\% are edge additions.  We range the total number of edits from $5$ to $45$ with increments of $5$, and for each number of edits we repeat our experiment $1000$ times.  In each iteration, we obtain a bound on the graph edit distance between $E(30)$ and the modified graph using our framework (\ref{Optimization P}) with three different constraint sets: the Schur-Horn orbitope, the constraint set $\mathcal{C}_{\mathcal{IS}(\G)}$, and an invariant convex set based on the Goemans-Williamson relaxation of the maximum cut value (which is discussed in greater detail in the next subsection).  Figure \ref{FigureExtremalPhaseTran} reports the ratio of the average computed lower bound on the graph edit distance to the number of edit operations for each constraint set.  (The number of edits is an upper bound on the true graph edit distance.)  As one might expect, the relaxation based on the constraint $\mathcal{C}_{\mathcal{IS}(\G)}$ yields the best lower bounds of the three approaches.  Specifically, even when a majority of the edges of $E(30)$ are removed, the constraint set $\mathcal{C}_{\mathcal{IS}(\G)}$ continues to provide lower bounds that are at least 40\% of the number of edit operations. In contrast, the bounds provided by the Schur-Horn orbitope constraint are much weaker, and those obtained using the Goemans-Williamson relaxation of the maximum cut value are ineffective.


\begin{figure}[hbt]
\begin{center}
\includegraphics[scale=0.14]{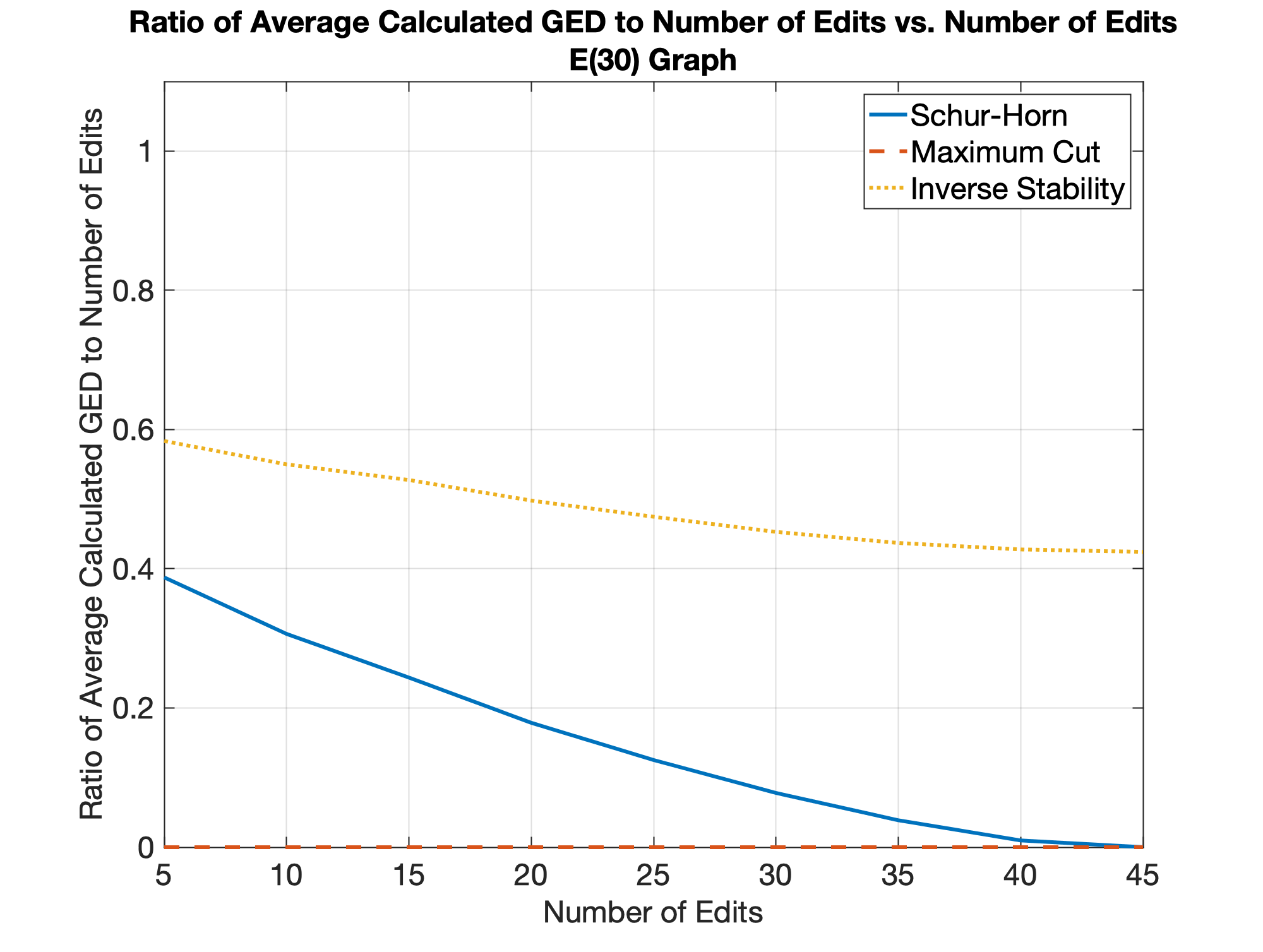}
\caption{Ratio of average computed lower bound on graph edit distance to number of edit operations. Experiment conducted on $E(30)$ graph.  The edit operations are 80\% edge deletions and 20\% edge additions.}\label{FigureExtremalPhaseTran}
\end{center}
\end{figure}



\subsection{Constraining the Maximum Cut Value} \label{SectionMaxcut}

In analogy with the inverse of the stability number, the function $g(A)$ due to Goemans and Williamson \cite{goemans1995improved} that is described in Section~\ref{SectionConvexInvariantSets} provides an efficiently computable upper bound on the maximum cut value of a graph.  As this function is invariant to conjugation of its argument by permutation matrices, its sublevel sets are invariant convex sets.  For a graph $\G$, we denote the associated set by $\mathcal{C}_{\mathcal{MC}(\G)}$:
\begin{align} \label{EqnCMC}
 \mathcal{C}_{\mathcal{MC}(\G)} :=& \{M\in\mathbb{S}^n ~|~ g(M)\leq g(A) \}  \notag \\
 =& \{ M\in\mathbb{S}^n~ | ~ \exists D\in\mathbb{S}^n \textrm{ diagonal,}~ M-D\succeq 0,\,  \frac{1}{4}\,\trace(M 11'-D) \leq g(A) \},
\end{align}
where $A$ is an adjacency matrix representing $\G$.  Reasoning in a similar manner as in the previous subsection, we observe that edits corresponding to a decrease in the value of the function $g$ represent feasible directions with respect to the set $\mathcal{C}_{\mathcal{MC}(\G)}$, and for such edits the constraint $\mathcal{C}_{\mathcal{MC}(\G)}$ is inactive.  Deleting edges from a graph reduces its maximum cut value, and one can check that directions represented by entrywise nonpositive matrices belong to the tangent cone at an adjacency matrix $A$ representing $\G$ with respect to $\mathcal{C}_{\mathcal{MC}(\G)}$.  Consequently, we should only expect the constraint $\mathcal{C}_{\mathcal{MC}(\G)}$ to potentially provide useful lower bounds on the graph edit distance in settings in which most of the edits to a graph $\G$ correspond to edge additions.  In some sense, this type of a graph inverse problem -- removing the smallest number of edges from a graph so that it satisfies a desired property -- is a complement of the graph completion problem discussed in the previous subsection.

\begin{figure}[t!]
    \centering
    \begin{subfigure}[t]{0.5\textwidth}
        \centering
        \includegraphics[scale=0.33]{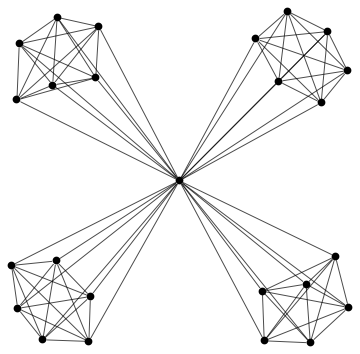}
		\caption{ Windmill graph $D(4,7)$}
	\end{subfigure}%
    ~
    \begin{subfigure}[t]{0.5\textwidth}
        \centering
        \includegraphics[scale=0.33]{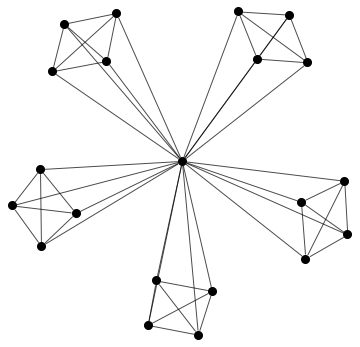}
        \caption{Windmill graph $D(5,5)$}
    \end{subfigure}
    \caption{Two sample Windmill graphs}\label{FigureWindmill}
\end{figure}

Building further on the preceding discussion, we remark that the constraint set $\mathcal{C}_{\mathcal{MC}(\G)}$ is most likely to be effective if adding even a small number of edges to $\G$ increases the value of the function $g$.  A prominent example of such graphs are the so-called Windmill graphs shown in Figure \ref{FigureWindmill}.  The Windmill graph $D(m,n)$ is constructed by taking $m$ copies of the complete graph $K_n$ and intersecting them at a single vertex.  Due to the ample amount of symmetry in these graphs, there are many partitions of the vertices into two sets that achieve the maximum cut value -- the number of  such partitions is $\binom{n-1}{n/2} ^m$ for even $n$ and $\binom{n}{(n-1)/2}^m$ for odd $n$.  Thus, Windmill graphs present a natural test family to evaluate the power of the constraint set $\mathcal{C}_{\mathcal{MC}(\G)}$ when the graph edits consist primarily of the addition of edges.  We present the results of numerical experiments on the Windmill graph $D(4,7)$ in a setting that closely mirrors the one in the previous subsection.  The Windmill graph $D(4,7)$ is a graph on 25 nodes with 84 edges and 216 non-edges.  The edits made to this graph consist mostly of edge additions -- 80\% are edge additions and the remaining 20\% are edge deletions.  We vary the number of edits from 10 to 200 with increments of 5 and consider 1000 random instances of perturbations for each number of edits.  For each problem instance, we obtain a lower bound on the edit distance by utilizing our framework (\ref{Optimization P}) with the Schur-Horn orbitope constraint, the Motzkin-Straus relaxation from the previous subsection, and the constraint $\mathcal{C}_{\mathcal{MC}(\G)}$.  We report the average ratio of the computed lower bound on the graph edit distance to the number of edit operations in Figure \ref{FigureMC}. (As before the number of edits is an upper bound on the graph edit distance.)

\begin{figure}[hbt]
\begin{center}
\includegraphics[scale=0.14]{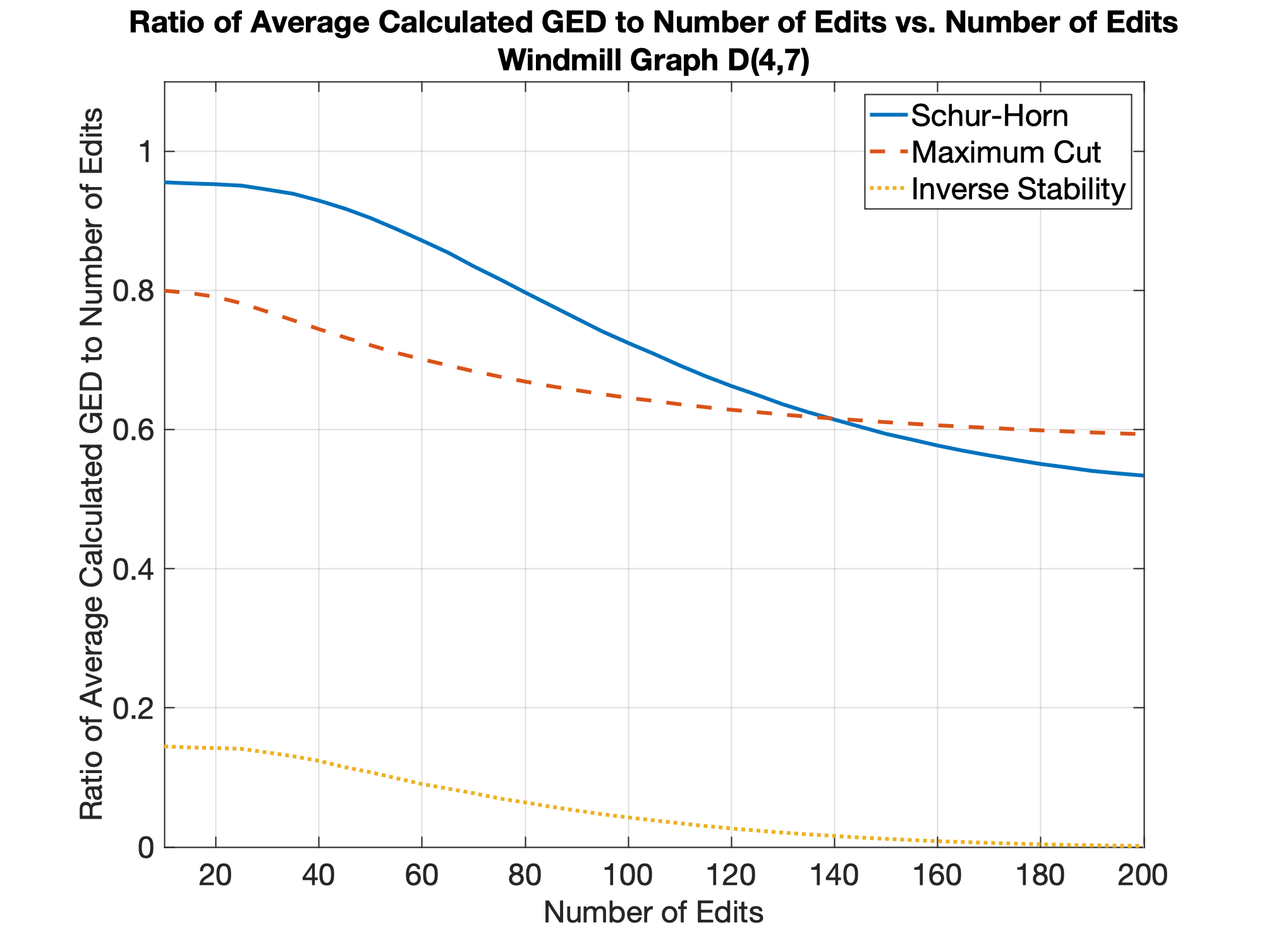}
\caption{Ratio of average computed lower bound on the graph edit distance to number of edit operations. Experiment conducted on Windmill graph $D(4,7)$. The edit operations are 80\% edge additions and 20\% edge deletions.}\label{FigureMC}
\end{center}
\end{figure}

From Figure \ref{FigureMC} we see that the Schur-Horn orbitope constraint produces the best lower bounds for the graph edit distance when the number of edits is small, whereas the constraint $\mathcal{C}_{\mathcal{MC}(\G)}$ produces the best lower bounds when the number of edits is large.  On average, both of these constraints provide bounds that are consistently better than 50\% of the total number of edits.  As the edits consist mainly of edge additions, the constraint based on the Motzkin-Straus relaxation of the inverse of the stability number performs poorly.

\section{Experiments with Real Data}\label{SectionRealExperiment}

In this section, we present experimental results that demonstrate the utility of our framework on real data.  We begin by introducing an extension of our framework to allow for edits that include vertex additions and deletions.  We then describe the bounds obtained on two widely studied datasets consisting of molecular structures.

\subsection{Enabling vertex additions and deletions}
\label{SectionExtendedFramework}


In many situations, one wishes to obtain bounds on the edit distance between two graphs consisting of different numbers of vertices.  In such cases one allows vertex insertions and deletions in addition to the usual operations of edge insertions and deletions that we've considered thus far.  To extend our framework to this setting, we allow an adjacency matrix to take on nonzero values on the diagonal to denote the presence or absence of a vertex.  Specifically, we consider a ``vertex-indexed adjacency matrix'' $A \in \mathbb{S}^n$ with entries equal to either zero or one and in which $A_{ij} = 1, ~ j \neq i$ implies that $A_{ii}=1$.  In words, a value of one on the $i$'th diagonal entry implies that a vertex corresponding to that index is `present' in the graph, and an edge being incident on a vertex implies that the vertex must be present in the graph.  (Note that a value of one on a diagonal entry does not represent a ``vertex weight'' but instead the presence of a vertex in a graph.) With this notation in hand, we are now in a position to describe a generalization of our framework that allows for vertex deletions and insertions.  Let $\G_1$ and $\G_2$ be two unweighted and unlabeled graphs on $n_1$ and $n_2$ vertices, respectively, and let $n := \max\{n_1,n_2\}$.  Letting $A_2 \in \mathbb{S}^n$ specify a vertex-indexed adjacency matrix for $\G_2$ with zeros on the diagonal corresponding to those indices that do not correspond to a vertex (when $n_2 < n$), consider the following convex optimization problem:
\begin{align}
\textrm{GED}_{LB}(\G_1, \G_2; \mathcal{C}_{\G_1}) &= \min_{X,E\in\mathbb{S}^n} \sum_{1\leq i \leq j \leq n} |E_{ij}| \notag \\
&~~~~~s.t. ~~~X+E = A_2	 \tag{$P_{ext}$} \label{ProgramExtended}\\
&~~~~~~~~~ ~~~X \in \mathcal{C}_{\G_1} \notag \\
&~~~~~~~~~ ~~~X_{ij} \leq X_{ii},~X_{ij} \leq X_{jj} ~\forall i,j \in \{1,\dots,n\}. \notag	
\end{align}
Here the set $\mathcal{C}_{\G_1}$ is an invariant convex set associated to $\G_1$, and the matrices $X$, $A_1$ and $A_2$ are to be interpreted as vertex-indexed adjacency matrices.  There are two main differences between the convex program \eqref{ProgramExtended} and the convex problem \eqref{Optimization P}.  The first is in the objective function in which we only sum the upper triangular elements of the matrix $E$ in the program \eqref{ProgramExtended}, as we do not wish to double-count the edge edits relative to vertex edits.  The second modification arises in the constraint in the last line of \eqref{ProgramExtended} based on the observation that if edges are incident to a vertex, then this vertex must be `present'.  Using a line of reasoning similar to that following the presentation of \eqref{Optimization P}, one can conclude that the optimal value of the convex program \eqref{ProgramExtended} provides a lower bound on the graph edit distance between $\G_1$ and $\G_2$ with the permissible edit operations being vertex additions/deletions and edge additions/deletions.  Finally, we note that our framework can also accommodate situations in which the cost of a vertex edit operation is different from that of an edge edit operation.



\subsection{Experimental Results on Chemistry Datasets} \label{SectionExperimentalResults}
We employ the convex program \eqref{ProgramExtended} to obtain lower bounds on graph edit distance problems arising in chemistry.  Specifically, we conduct experiments on two datasets known as the Polycyclic Aromatic Hydrocarbons (PAH) dataset and the Alkane dataset.\footnote{Available online at https://brunl01.users.greyc.fr/CHEMISTRY/}  Both of these datasets consist of unlabeled, unweighted graphs representing chemicals, with the vertices of the graphs corresponding to carbon atoms in a molecule and edges specifying bonds between two carbons.  These datasets have been used as benchmarks for evaluating the performances of graph edit distance algorithms; for example, see \cite{abu2017graph} and \cite{daller2018approximate} for comparisons of the performance of various algorithms on these datasets.  For each dataset, we compare upper bounds on the average edit distance taken over all pairs of graphs (obtained using other procedures) with lower bounds on the average obtained using our method.

The Alkane dataset consists of 150 unlabeled, acyclic graphs representing alkanes, with the number of vertices ranging from 1 to 10 vertices (the average is 8.9) and an average degree of 1.8.  As these graphs are relatively small in size, the average pairwise graph edit distance for this dataset can be calculated exactly using combinatorial algorithms such as the $A^*$ procedure \cite{hart1968formal}.  The PAH dataset consists of 94 graphs representing polycyclic aromatic hydrocarbons. As with the Alkane dataset, the vertices of the graphs in this dataset denote carbon atoms, and two vertices are connected if there exists a bond between the corresponding carbons.  Unlike the Alkane dataset, the chemicals in the PAH dataset represent large compounds: the smallest graph in PAH has 10 vertices, the largest graph has 28 vertices, and the average number of vertices is 20.7. The average degree of the graphs in the PAH dataset is 2.4.  Due to this larger size, calculating the exact average pairwise graph edit distance of the PAH dataset is prohibitively expensive.  In fact, to the best of our knowledge, the exact average pairwise graph edit distance of the PAH dataset is unknown to this date \cite{bougleux2017graph}. Consequently, obtaining guaranteed lower bounds on the average graph edit distance of the PAH dataset is especially useful as a way to compare to known average upper bounds.

For each pair of graphs, we employ the convex program \eqref{ProgramExtended} twice by switching the roles of $\G_1$ and $\G_2$, and take the larger optimal value as our lower bound.  In each case we utilize four different types of invariant convex set constraints: the Schur-Horn orbitope ($\mathcal{C}_\mathcal{SH}$), the Motzkin-Straus bound on the inverse of the stability number ($\mathcal{C}_\mathcal{IS}$), the Goemans-Williamson bound on the maximum cut value ($\mathcal{C}_\mathcal{MC}$), and finally the intersection of all these three constraints ($\mathcal{C}_\mathcal{SH}\cap \mathcal{C}_\mathcal{IS} \cap \mathcal{C}_\mathcal{MC} $).  In our experiments, we follow the convention adopted in the graph edit distance literature with these two datasets, namely that the cost of an edit operation is equal to three.\footnote{The reason for this choice in that community is that vertex/edge deletions/insertions are considered more significant edit operations than vertex/edge label substitutions which have a lower cost of one associated to them (in this paper, we do not consider such edits based on substitutions.)}  The average pairwise lower bounds obtained using our convex program \eqref{ProgramExtended} on the Alkane and PAH datasets are given in Table \ref{TableResults}.

\begin{table}[h] 
\begin{center}
\begin{tabular}{ |c|c|c|c|c|c| }
  \hline
  Dataset &
  Best known upper bound  & \multicolumn{4}{|c|}{ Lower bounds on the average pairwise GED via (\ref{ProgramExtended})}  \\
  \cline{3-6}
   Name &  on the average GED &  $\mathcal{C}_\mathcal{MC}$ & $\mathcal{C}_\mathcal{IS}$ & $\mathcal{C}_\mathcal{SH}$ & $\mathcal{C}_\mathcal{MC}\cap\mathcal{C}_\mathcal{IS} \cap \mathcal{C}_\mathcal{SH}$\\
  \hline
	Alkane & 15.3 (exact) \cite{daller2018approximate} & 4.66 & 6.12 & 9.58 & 10.72	\\
	\hline
	PAH & 29.8 \cite{daller2018approximate}  & 12.01 & 14.52 & 20.29 & 21.60	\\
	\hline
\end{tabular}
\end{center}
\caption{Average pairwise graph edit distances of the Alkane and PAH datasets. Edit operations are limited to edge and vertex additions and removals. Every edit operation incurs a cost of 3.}\label{TableResults}
\end{table}


There are a number of interesting aspects to these results.  For both datasets, a constraint based only on the Goemans-Williamson relaxation seems to produce the worst lower bounds (4.66 and 12.01), while the Schur-Horn orbitope constraint produces the best lower bounds (9.58 and 20.29) when only a single type of invariant convex constraint is employed.  As expected, the combination of all three individual constraint sets produces the best overall lower bounds (10.72 and 21.60).  More broadly, these results demonstrate the effectiveness of our approach in producing useful lower bounds for graph edit distance problems arising from real data in a computationally tractable manner.  Specifically, for the Alkane dataset the average lower bound 10.72 is obtained using our convex programming framework and the exact value of the average graph edit distance is $15.3$ (which is obtained via combinatorial approaches).  Our results have more interesting implications for the PAH dataset as it is prohibitively expensive to compute the exact average graph edit distance for this dataset due to the large size of its constituents. In particular, the best-known upper bound on the average graph edit distance of PAH is 29.8 \cite{daller2018approximate}. Our convex relaxation framework produces a lower bound of $21.6$ on the average graph edit distance over all pairs of graphs in PAH, which provides a floor on the possible improvement that one should expect to obtain via better algorithms for computing graph edit distances.


\section{Discussion} \label{Discussion}

In this paper we introduce a framework based on convex graph invariants for obtaining lower bounds on the edit distance between two graphs.  Much of the literature on this topic provides methods for computing upper bounds on the edit distance between two graphs by identifying a feasible sequence of edits to transform one graph to the other.  Our approach is qualitatively different in that it is based on convex relaxation and it leads to guaranteed lower bounds on the edit distance.  Further, our approach can be adapted to the structure underlying the two graphs.  We provide both theoretical and empirical support for our method.

There are a number of potential directions for further investigation arising from our paper.  First, our analysis of the performance of the Schur-Horn relaxation could potentially be tightened in order to obtain sharper conditions for the success of our algorithm.  For example, Corollary \ref{CorollaryMain} only utilizes information about the second most-repeated eigenvalue, and while this provides order-optimal scaling results for families such as triangular graphs, it may be possible to improve our analysis to obtain order-optimal bounds for other families as well.  More broadly, a key step in carrying out a precise theoretical analysis of the power of an invariant convex constraint set is to obtain a full understanding of the facial structure of the set, and it would be of interest to develop such a characterization for a larger suite of invariant convex sets than those presented in this paper.  Finally, a commonly encountered question in many applications is to test whether a given graph is a minor of another graph, and it would be useful to extend the framework described in our paper to address this problem.

\bibliographystyle{plain}
\bibliography{GEDbib}



\end{document}